\newtheorem{theorem}{Theorem}
\newtheorem{definition}{Definition}
\newcommand{\gui}[1]%
{\ifthenelse{\equal{\showcomments}{true}}%
{{\color{blue}{\small [\textbf{G:} #1}]}}{\xspace}}%
\newcommand{\nicolas}[1]%
{\ifthenelse{\equal{\showcomments}{true}}%
{{\color{orange}{\small [\textbf{N:} #1}]}}{\xspace}}%
\newcommand{\hilbert}{\mathcal{H}}
\newcommand{\alga}{\mathcal{A}}
\newcommand{\algb}{\mathcal{B}}
\newcommand{\algm}{\mathcal{M}}
\newcommand{\idm}{\mathbb{I}}
\newtheorem{prop}[theorem]{Proposition}
\newtheorem{ex}[theorem]{Example}
\newtheorem{aspt}[theorem]{Assumption}
\newcommand{\Hcal}{\mathcal{H}\xspace}
\DeclareMathOperator{\tr}{tr}
\newcommand{\ket}[1]{| #1 \rangle}
\newcommand{\bra}[1]{\langle #1 |}
\newcommand{\braket}[2]{\langle #1 | #2 \rangle}
\newcommand{\showcomments}{true}
\theoremstyle{nonitalic}
\begin{document} 

\pagenumbering{roman}

\begin{titlepage}

\baselineskip=15.5pt \thispagestyle{empty}


\vspace{-2cm}

\begin{center}
    {\fontsize{19}{24}\selectfont \bfseries Subsystems (in)dependence in GIE proposals}
\end{center}


\vspace{0.1cm}

\renewcommand*{\thefootnote}{\fnsymbol{footnote}}

\begin{center}
     Nicolas Boulle$^{1}$ \footnote{nicolas.boulle@miun.se},   Guilherme Franzmann$^{2,3,4}$ \footnote{guilherme.franzmann@su.se}
\end{center}

\renewcommand*{\thefootnote}{\arabic{footnote}}
\setcounter{footnote}{1}

\vspace{-0.3cm}
\begin{center}
\rule{14cm}{0.08cm}
\end{center}

\footnotesize
\vskip4pt
   \noindent
    \vskip4pt
   \noindent 
    \textsl{$^1$Department of Engineering, Mathematics, and Science Education, Mid Sweden University, Holmgatan 10, 851 70 Sundsvall, Sweden}

    \vskip4pt
   \noindent 
    \textsl{$^2$Basic Research Community for Physics e.V., Mariannenstraße 89, Leipzig, Germany}

    \vskip4pt
   \noindent 
    \textsl{$^3$Nordita, KTH Royal Institute of Technology and Stockholm University,
Hannes Alfvéns v\"ag 12, SE-106 91 Stockholm, Sweden}

    \vskip4pt
   \noindent 
    \textsl{$^4$Department of Philosophy, Stockholm University, Stockholm, Sweden}

\normalsize


\vspace{0.5cm}

\hrule
\vspace{0.3cm}

\noindent {\bf Abstract} 

Recent proposals suggest that detecting entanglement between two spatially superposed masses would establish the quantum nature of gravity. However, these gravitationally induced entanglement (GIE) experiments rely on assumptions about subsystem independence. We sharpen the theoretical underpinnings of such proposals by examining them through the lens of algebraic quantum field theory (AQFT), distinguishing operational and algebraic notions of independence. We argue that state and measurement independence of subsystems, essential to the experimental logic, is nontrivial in the presence of gauge constraints and gravitational dressing. Using gravitationally dressed fields, we recall that commutation relations between spacelike separated observables are nontrivial, undermining strict Hilbert space factorization. We further explore the implications for entanglement witnesses, investigating the Tsirelson bound when subsystem algebras fail to commute, and showing that the bound persists for a suitably symmetrized CHSH observable even though the operational status of such ``joint'' observables becomes delicate when commutativity is lost. Our analysis highlights how even within linearized covariant quantum gravity, violations of microcausality may affect both the interpretation, modelling, and design of proposed laboratory tests of quantum gravity, despite remaining negligible for current experimental regimes. Although we consider GIE-style protocols as a concrete case study, the subsystem-independence issues we highlight are generic to low-energy (perturbative) quantum gravity. Finally, we derive estimates for dressing-induced microcausality violations, which suggest a complementary avenue to current proposals as a probe of the quantum nature of gravity (likely far beyond current experimental sensitivity, though).

\vskip10pt
\hrule
\vskip10pt

\newpage
\setlength{\topmargin}{0in}

\end{titlepage}

\clearpage



\tableofcontents

\noindent\makebox[\linewidth]{\rule{\textwidth}{0.4pt}}

\pagenumbering{arabic}
\setcounter{page}{1}

\section{The quantum nature of gravity}

Quantum mechanics is widely regarded as the fundamental paradigm describing nature, yet gravity has stubbornly resisted it. While other interactions are successfully described by quantum field theories, gravity can be consistently treated within this framework only as an effective field theory at low energies due to its perturbative non-renormalizability \cite{Wallace:2021qyh}. For this reason,  whether gravity is truly a quantum field or whether an alternative ontology is more appropriate remains an open question. Arguably \cite{Fragkos:2022tbm}, empirical evidence has so far probed gravity directly only in the classical regime (acting on quantum matter), leaving the quantum nature of spacetime itself untested. Indeed, attempts to retain a fundamentally classical gravity consistently coupled to quantum matter have been put forward, most prominently in the ``post-quantum'' framework of \cite{Oppenheim:2018igd}.

In recent years, new proposals have emerged to test the quantum character of gravity using tabletop experiments. A major motivation is that gravitationally induced entanglement (GIE) would constitute a genuinely quantum-gravitational signature in a regime accessible to near-term experiments \cite{Bose,DelicEtAl2020LevitatedCooling,WestphalEtAl2021MillimeterGravity,MargalitEtAl2021SternGerlachInterferometer,Aspelmeyer2022AvoidClassicalWorld,PandaEtAl2024LatticeAtomInterferometerGravity,bose2025spinbasedpathwaytestingquantum}. These ideas are rooted in quantum information–theoretic results: specifically, that local operations and classical communication (LOCC) (see \cite{Chitambar:2014svb} for a review) cannot generate entanglement, while a quantum mediator can. The seminal proposals by Bose et al. \cite{Bose} and Marletto–Vedral \cite{Marletto} (BMV) suggest that detecting entanglement between two spatially superposed masses, interacting solely via gravity, would imply that the mediator---the gravitational interaction---possesses quantum degrees of freedom. A widely advertised strengthening of this claim is that the inference can be made largely model-independent, by appealing to entanglement monotonicity under LOCC and related ``no-go'' results for entanglement generation via purely classical mediators \cite{Horodecki:2009zz,Galley:2020qsf,ludescher2025gravitymediatedentanglementinfinitedimensionalsystems}. This has sparked extensive analysis and debate (see also \cite{Christodoulou:2022mkf,Huggett:2022uui,Fragkos:2022tbm}) about what such experiments truly test and under which assumptions. In particular, the scope of the LOCC-style inference (e.g. what exactly counts as ``classical'' mediation, and which locality assumptions are doing the work) has been repeatedly contested in the recent literature \cite{Huggett:2022uui,Fragkos:2022tbm,DiBiagio:2025twt}. 

However, these schemes are not truly model-independent. Even when expressed in a Newtonian nonlocal Hamiltonian form, the dynamics tacitly assume linearized covariant quantum gravity, i.e., a spin-2 massless field mediating the interaction \cite{Bose_Supplementary_2017}. Therefore, we believe it is within the quantum-field theoretic framework that the generation of entanglement and the structure of entanglement witnesses must be evaluated\footnote{Recently it has been suggested that classical gravity alone as a field theory coupled with quantum-field-theory matter (emphasis on their field nature) can evade the LOCC restriction and generate entanglement \cite{Aziz:2025ypo}, while \cite{ludescher2025gravitymediatedentanglementinfinitedimensionalsystems} claims exactly the opposite by modeling both systems as unital C$^*$-algebras; gravity as a commutative one. One possible reconciliation might be due to the different notions of locality (spatiotemporal or in Hilbert space) that are often conflated; see \cite{DiBiagio:2025twt} for a concise discussion.}. Thus, our work aims to sharpen this discussion by analyzing subsystem independence in this context. We recall that the BMV-type protocols assume initially uncorrelated systems, represented in a factorized Hilbert space. Yet, in quantum field theory with long-range gauge fields, and especially in gravity, algebraic notions of independence that underlie Hilbert-space factorizability are nontrivial, and both state and measurement independence might be compromised. In fact, dressed field operators, required to make observables gauge-invariant, generally do not commute at spacelike separation in linearized quantum gravity \cite{Donnelly:2015hta,Donnelly:2016rvo}. As we will see, that alone is enough to undermine the exact Hilbert-space factorization presupposed in the quantum-information description and the existence of uncorrelated states, and thus challenge a clear notion of quantum subsystems \cite{franzmann2024be,EmerGe_proj_2}.   

To scrutinize the modelling of the BMV-type proposals, we will resort to several theorems proved in the context of algebraic quantum field theory. In particular, we will rely on the thorough work by Summers \cite{summers1990,summers2009} where notions of independence were deeply investigated and clarified. To make the discussion more digestible, we will focus only on the most relevant results that have direct bearings on the assumptions of the experiment and its possible interpretations.  

Having a clear understanding of the experimental protocol’s underpinnings extends beyond the direct interpretation of positive or negative detection. For instance, some authors argue that a positive BMV-type signal would constitute evidence for quantum superpositions of spacetime geometries \cite{Christodoulou_2019}. This interpretive step, however, is conditional: it presupposes that the weak-field limit of GR remains valid for micron-scale configurations of nanometre-sized test masses, and it relies on a particular quantum-field-theoretic modelling of the gravitational mediator. As our analysis shows, questions of subsystem independence, gauge-invariant dressing, and potential microcausality issues all bear directly on how entanglement witnesses are implemented and how their outcomes should be read. Until these modelling choices and their implications are established, such ontological claims remain suggestive interpretations of gravity as a quantum mediator, rather than unambiguous evidence for ``spacetime superposition''.

Finally, within linearized covariant quantum gravity, signatures of quantumness may arise beyond entanglement. In particular, we estimate the size of microcausality violations induced by gravitational dressing, and we translate these effects into order-of-magnitude, dimensionless figures of merit for parameter regimes relevant to BMV-type setups. While these effects are expected to be extremely small, they provide a concrete target for assessing how
``locality'' and subsystem structure are modified once one insists on gauge-invariant observables.

The paper is divided as follows. We start by reviewing the different accounts of the BMV experiments in Sec. \ref{sec:BMV}, making their underlying assumptions explicit.  In Sec.~\ref{section:notions_indep}, we introduce and compare different notions of subsystem independence in quantum theory, emphasizing the distinction between commensurability, statistical independence, and the split property, and highlighting their equivalence in finite-dimensional quantum mechanics and their inequivalence in quantum field theory. In Sec.~\ref{section:Dressed states and Commutativity}, we analyze gauge-invariant dressed observables in QED and linearized quantum gravity, showing that while commutativity at spacelike separation is preserved in the former, it is generically violated in the latter. In Sec.~\ref{sec:bell}, we examine Tsirelson’s bound in the presence of noncommuting subalgebras, clarifying why standard entanglement witnesses remain formally valid. Sec.~\ref{sec:microcausality_violation} then identifies dressing-induced violations of microcausality as a potential signature of linearized quantum gravity and provides order-of-magnitude estimates relevant for current experimental parameters. We conclude by relating our results to other, conceptually similar obstructions to subsystem factorization in gravitational settings, and by briefly commenting on implications for inflationary cosmology.

\paragraph{Operational vs.\ algebraic notions.}
A standing theme in what follows is a distinction between \emph{operational} and \emph{algebraic} claims about subsystems. Operational claims concern what can be established from laboratory statistics, and are formulated in terms of admissible preparation and measurement procedures. No-signalling and local preparability are of this kind. In the same spirit, we take, for example, \emph{statistical independence} to primarily mean an \emph{operational} notion: the ability of different parties to prepare states independently by local procedures, without the choice of preparation by any party constraining what can be prepared by the others. Algebraic claims, by contrast, are purely structural properties of the underlying observable algebras (e.g. commutativity, commutants, inclusions, existence of an intermediate type~I factor as in the funnel property, or subalgebra type-ness), and do not by themselves encode which operations are physically available. We will therefore be explicit about which steps in our argument are algebraic and which require operational input. Finally, when we invoke the \emph{existence of uncorrelated states}, we will treat it as a formal/state-space condition whose validity may follow from algebraic structure (notably split inclusions) together with the operational assumptions being discussed\footnote{GF thanks Jan Glowacki for discussions on these matters.}.

\section{Algebraic setup and assumptions for BMV/GIE}\label{sec:BMV}

We begin by making explicit---and sharply separating---the \emph{operational} assumptions about preparation and measurement independence that underwrite BMV/GIE-style arguments, from the \emph{algebraic} conditions typically used to represent (or ``implement'') those assumptions within quantum theory. Our aim is not merely to review the proposals of ~\cite{Bose, Marletto}, but to lay out a clean logical scaffold for the subsystem structure they presuppose. We do so in a quantum-informational, algebraic language. We start with the bipartite description, then turn to the tripartite description (where gravity is modeled as an additional mediating subsystem), and finally move to the field-theoretic setting that is meant to underwrite, and potentially obstruct, both idealized subsystem pictures.

\subsection{Bipartite Description}

Consider two masses, \(m_1\) and \(m_2\), forming a joint quantum system described by a Hilbert space \(\hilbert\) with algebra of bounded observables \(\mathcal{A}=\mathcal{B}(\hilbert)\). The system is in a (normal) state \(\omega:\mathcal{A}\to\mathbb{C}\), represented by a density operator \(\rho\) such that
\begin{equation}
    \omega(O)= \tr (\rho\,O), \qquad O\in\mathcal{A}\,.
\end{equation}
Within \(\mathcal{A}\), we model the two masses by identifying two subalgebras \(\mathcal{A}_1,\mathcal{A}_2\subset\mathcal{A}\), each containing the observables associated with one of the masses. Given a global state \(\omega\), its \emph{restriction} to the algebra \(\mathcal{A}_i\) is defined as the linear functional
\begin{equation}
\omega_i := \omega|_{\mathcal{A}_i}, \qquad \text{defined by } \omega_i(A_i)\equiv \omega(A_i),\ \ \forall\,A_i\in\mathcal{A}_i\,.
\end{equation}
Operationally, this means that expectation values of observables belonging to the subalgebra \(\mathcal{A}_i\) are computed directly from the same global density operator~\(\rho\). Later, once a tensor product structure is introduced, this restriction will coincide with the usual notion of taking the partial trace.

\medskip

\noindent\textbf{Operational assumptions and their algebraic implementation.}
The first three assumptions we use are \emph{operational}: they are intended to capture, in plain terms, how the two masses can be \emph{prepared} and \emph{measured} (as evidenced by laboratory statistics). We then \emph{implement} these operational assumptions within the algebraic framework by imposing corresponding conditions on the chosen subalgebras \(\mathcal{A}_1,\mathcal{A}_2\) and on the available global state space.

\begin{aspt}[\textbf{Operational:} Independent preparation]
\label{aspt:prep-ind}
By acting locally on each mass, one can prepare them independently: any pair of local states can be realized simultaneously.
\end{aspt}

\noindent
We encode this algebraically by requiring that, for any pair of normal states \(\omega_1\) on \(\mathcal{A}_1\) and \(\omega_2\) on \(\mathcal{A}_2\), there exists at least one global normal state \(\omega\) on \(\mathcal{A}\) whose restrictions coincide with them:
\begin{equation}
\omega|_{\mathcal{A}_1}=\omega_1, \qquad \omega|_{\mathcal{A}_2}=\omega_2\,.
\end{equation}
This captures preparation independence at the level of state extensions, without yet imposing any condition on correlations or on joint measurability.

\begin{aspt}[\textbf{Operational:} Uncorrelated initial preparation]
\label{aspt:uncorr}
There exists an initial preparation in which the joint statistics contain no correlations between the two masses: outcomes on one side carry no information about the state prepared on the other.
\end{aspt}

\noindent
In the algebraic setting, the absence of correlations is expressed by the existence of (normal) \emph{uncorrelated} states: for any pair of local preparations \((\omega_1,\omega_2)\) there exists a global state \(\omega\) such that
\begin{equation}
\omega(A_1A_2)=\omega_1(A_1)\,\omega_2(A_2),
\qquad
\forall\,A_i\in\mathcal{A}_i\,.
\end{equation}
Saying that the two systems are uncorrelated thus means that the global state assigns expectation values to joint observables that decompose into a \textit{product} of their marginal expectations defined in terms of the local states. In ordinary quantum mechanics, this corresponds to a product\footnote{This factorization condition is strictly stronger than separability. In finite-dimensional quantum mechanics, a separable state has the form \(\rho=\sum_k p_k\,\rho^{(k)}_1\otimes\rho^{(k)}_2\), so that \(\omega(A_1A_2)=\sum_k p_k\,\omega^{(k)}_1(A_1)\omega^{(k)}_2(A_2)\), which in general does \emph{not} equal \(\omega_1(A_1)\omega_2(A_2)\). Separable states can therefore still carry \emph{classical} correlations, whereas the product condition above excludes such correlations.} density matrix \(\rho=\rho_1\otimes\rho_2\). Operationally, however, the intended content is prior to any sort of product structure in Hilbert space, and the product condition above is treated as a state-space condition that may (in other contexts) follow from algebraic structure (e.g.\ split inclusions) together with operational input.

\begin{aspt}[\textbf{Operational:} Independent measurability]
\label{aspt:comm-op}
The observables associated with the two masses can be measured jointly (without mutual disturbance): the measurement choices on one side do not obstruct, nor alter the statistics of, measurements performed on the other.
\end{aspt}

\noindent
A standard sufficient way to represent this operational commensurability within the algebraic framework is to require that the corresponding subalgebras commute\footnote{%
Note that the factorization condition in Assumption~\ref{aspt:uncorr}, written as $\omega(A_1A_2)=\omega_1(A_1)\omega_2(A_2)$, has a direct operational interpretation as ``absence of correlations in joint statistics'' only once one assumes commensurability (e.g.\ via $[\mathcal A_1,\mathcal A_2]=0$; see Sec. \ref{sec:bell}). Without commensurability, we treat it as a formal/state-space condition on a global extension $\omega$ rather than as an experimentally accessible joint expectation value.},
\begin{equation}
[\mathcal{A}_1,\mathcal{A}_2]=0\,.
\end{equation}
In what follows, we will be explicit when we use \emph{operational} commensurability and when we invoke its \emph{algebraic} implementation via commutativity.

\medskip

The remaining assumption is \emph{purely algebraic}: it is not an operational statement about preparations or measurements, but rather a structural condition that lets one recover the standard subsystem picture in terms of a Hilbert-space tensor product structure.

\begin{aspt}[\textbf{Algebraic:} Finite-dimensionality and completeness]
\label{aspt:finite}
The relevant subsystem algebras are finite dimensional and complete.\footnote{Completeness: the subalgebras \(\{\mathcal{A}_\alpha\}_\alpha\) jointly generate all of \(\mathcal{B}(\hilbert)\),
\(
\bigvee_\alpha \mathcal{A}_\alpha \cong \mathcal{B}(\hilbert),
\)
where \(\bigvee_\alpha \mathcal{A}_\alpha\) denotes the von Neumann algebra generated by the union of the \(\mathcal{A}_\alpha\), defined as the double commutant of the \(*\)-algebra generated by \(\cup_\alpha \mathcal{A}_\alpha\).}
\end{aspt}

\noindent
Under the \emph{algebraic} assumptions of commuting subalgebras (as an implementation of independent measurability) together with them representing local-accessible observables,  finite-dimensionality, and completeness, it follows from the result of Zanardi \emph{et al.}~\cite{Zanardi:2004zz} that the global Hilbert space admits a tensor product structure (TPS),
\begin{equation}
\hilbert \simeq \hilbert_1 \otimes \hilbert_2\,,
\end{equation}
such that the subsystem algebras can be represented as
\begin{equation}
\mathcal{A}_1 \simeq \mathcal{B}(\hilbert_1)\otimes \idm,\qquad
\mathcal{A}_2 \simeq \idm \otimes \mathcal{B}(\hilbert_2)\,.
\end{equation}

\medskip 

With these operational and algebraic assumptions in place, we can finally consider the bipartite model. Each mass is prepared in a spatial superposition of two orthogonal localized states,
\begin{equation}
\ket{\psi_i}=\tfrac{1}{\sqrt{2}}\big(\ket{L}_i+\ket{R}_i\big), \qquad \braket{L_i}{R_i}=0, \quad \quad i\in\{1,2\}\,,
\end{equation}
separated by a distance \(\Delta x\). The total initial state therefore reads
\begin{equation}
\label{eq:initial}
\ket{\psi(0)}_{12}
= \tfrac{1}{2}\big(\ket{L}_1+\ket{R}_1\big)\big(\ket{L}_2+\ket{R}_2\big)
\in \hilbert_1\otimes\hilbert_2\,.
\end{equation}

The only interaction considered between the two masses is their mutual Newtonian attraction,
\begin{equation}
\hat H_G = -\frac{G\,  m_1 m_2}{|\hat{\mathbf{r}}_1 - \hat{\mathbf{r}}_2|}\,.
\end{equation}
After an interaction time \(\tau\), the state becomes
\begin{equation}
\ket{\psi(\tau)}_{12} = \frac{e^{i\phi}}{\sqrt2}
\left\{
\ket{L}_1\,\frac{\ket{L}_2+e^{i\Delta\phi_{LR}}\ket{R}_2}{\sqrt{2}}
+
\ket{R}_1\,\frac{e^{i\Delta\phi_{RL}}\ket{L}_2+\ket{R}_2}{\sqrt{2}}
\right\}\,,
\end{equation}
with
\begin{equation}
\phi_{RL}\sim\frac{G m_1 m_2 \tau}{\hbar (d-\Delta x)},\quad
\phi_{LR}\sim\frac{G m_1 m_2 \tau}{\hbar (d+\Delta x)},\quad
\phi\sim\frac{G m_1 m_2 \tau}{\hbar d}\,,
\end{equation}
and \(\Delta\phi_{RL}=\phi_{RL}-\phi\), \(\Delta\phi_{LR}=\phi_{LR}-\phi\).
Whenever \(\Delta\phi_{LR}+\Delta\phi_{RL}\neq 2n\pi\) (\(n\in\mathbb{N}\)), the final state is entangled\footnote{To avoid dealing with entanglement between spatial degrees of freedom, which is difficult to measure, Bose \emph{et al.}'s setup includes spin degrees of freedom associated with the spatial positions and makes use of a Stern--Gerlach interferometer. Finally, entanglement is detected using an entanglement witness. We will not delve into these experimental details here, as our focus lies on the conceptual assumptions required for the argument's validity rather than on the implementation itself.}.

At this level of description, gravity is represented by a direct two-body interaction \(H_{12}\) on \(\Hcal_1 \otimes \Hcal_2\). Such a Hamiltonian is \emph{nonlocal} with respect to the bipartition because it cannot be written as a sum of strictly local terms \(H_1 \otimes \idm + \idm \otimes H_2\)\footnote{The locality notion here is not spatiotemporal. In finite-dimensional quantum mechanics, one distinguishes Bell nonlocality (instantiated as entangled states) from dynamical \(k\)-locality, where a \(k\)-local Hamiltonian acts nontrivially on at most \(k\) tensor factors \cite{Cotler:2017abq}. A bipartite interaction term \(H_{12}\) is \(2\)-local in this latter sense, but it is nonlocal because it jointly acts on both subsystems, which defines the whole system.}. Entanglement generated by such a bipartite interaction does not, by itself, imply that gravity possesses quantum degrees of freedom. The LOCC no-entanglement theorems apply only when the mediator acts \emph{locally} on each party and communicates classical information; they do not constrain scenarios in which a direct coupling is already present.

Therefore, the inference that gravity is quantum arises only once the 
interaction is reformulated as two couplings, \(H_{1G}\) and \(H_{2G}\), between each mass and a third subsystem \(G\) representing the gravitational mediator. In this tripartite picture, entanglement between \(m_1\) and \(m_2\) can emerge from initially factorized states only if \(G\) features a noncommutative operator algebra. We return to this point below.

\subsection{Tripartite description: gravity as a mediating subsystem}


In this \emph{tripartite} description, the total system is described by the Hilbert space
\begin{equation}
\hilbert = \hilbert_1 \otimes \hilbert_2 \otimes \hilbert_G\,,
\end{equation}
with corresponding observable algebras
\(\mathcal{A}_1, \mathcal{A}_2, \mathcal{A}_G \subset \mathcal{L}(\hilbert)\)\,,
each acting nontrivially only on its own factor. 
The initial global state is taken to be fully factorized,
\begin{equation}
\rho(0) = \rho_1 \otimes \rho_2 \otimes \rho_G\,,
\end{equation}
where the gravitational subsystem is prepared in a ``ready'' or background state 
\(\rho_G = \ket{\gamma_0}\!\bra{\gamma_0}\),
representing the unperturbed gravitational field prior to its interaction with the two masses. 

The key structural difference with respect to the bipartite case is that the Hamiltonian now consists of two couplings between each mass and gravity,
\begin{equation}
\hat H = \hat H_{1G} + \hat H_{2G}\,,
\end{equation}
with no direct term coupling the two masses. Each $\hat H_{iG}$ acts nontrivially only on $\hilbert_i \otimes \hilbert_G$. 
Since both couplings act on the gravitational subsystem, they do not commute in general, $[\hat H_{1G},\hat H_{2G}]\neq 0$, unless additional assumptions are made about the structure of the interaction on $\hilbert_G$. Nevertheless, in the weak-coupling and short-time regime relevant for the BMV protocol, 
the total time evolution can be approximated as
\begin{equation}
U(t)=e^{-i(\hat H_{1G}+\hat H_{2G})t/\hbar}
\simeq e^{-i\hat H_{1G}t/\hbar}\,e^{-i\hat H_{2G}t/\hbar}
\equiv U_{1G}(t)\,U_{2G}(t),
\end{equation}
up to higher-order corrections. This factorization implements the interactions between each mass and the gravitational subsystem as effectively local operations.

During the interaction, the gravitational state evolves conditionally on the spatial configuration of the two masses, coherently ``recording'' their branch information. In the path basis $X,Y\in\{L,R\}$, it is convenient to idealize the couplings as controlled unitaries on the mediator,
\begin{equation}
U_{1G}=\sum_{X\in\{L,R\}} \ket{X}\!\bra{X}_1\otimes I_2\otimes U_X,
\qquad
U_{2G}=\sum_{Y\in\{L,R\}} I_1\otimes \ket{Y}\!\bra{Y}_2\otimes V_Y,
\end{equation}
so that (with the appropriate time ordering if $[U_X,V_Y]\neq 0$)
\begin{equation}
U_{1G}U_{2G}\,\ket{X}_1\ket{Y}_2\ket{\gamma_0}
= \ket{X}_1\ket{Y}_2\,U_XV_Y\ket{\gamma_0}
\equiv e^{i\phi_{XY}}\ket{X}_1\ket{Y}_2\ket{\gamma_{XY}}\,,
\end{equation}
where the phases $\phi_{XY}$ reproduce those of the bipartite model, and each $\ket{\gamma_{XY}}$ represents the gravitational field state correlated with the corresponding configuration $(X,Y)$ of the masses. After an interaction time~$\tau$, the total state becomes
\begin{equation}
\ket{\Psi(\tau)}_{12G}
   = \frac{e^{i\phi}}{\sqrt2}
   \!\left[
   \ket{L}_1\frac{\ket{\gamma_{LL}}\ket{L}_2+e^{i\Delta\phi_{LR}}\ket{\gamma_{LR}}\ket{R}_2}{\sqrt{2}}
   + 
   \ket{R}_1\frac{e^{i\Delta\phi_{RL}}\ket{\gamma_{RL}}\ket{L}_2+\ket{\gamma_{RR}}\ket{R}_2}{\sqrt{2}}
   \right]\!\,.
\end{equation}

This tripartite model makes the locality structure explicit: the masses never interact directly but only through~$G$. 
As it's assumed that each mass couples to the gravitational field, the interaction Hamiltonians $\hat{H}_{1G}$ and $\hat{H}_{2G}$ correlate the field with the spatial branches of the masses. The gravitational subsystem thus evolves into a \emph{superposition of classical field configurations}, reflecting the distinct mass distributions, and in this sense exhibits a quantum character.  Under these conditions, the observation of entanglement between the masses implies that $G$ cannot be a classical system. 

The tripartite formulation therefore clarifies what is at stake in the BMV proposals. 
Entanglement between distant masses does not, by itself, demonstrate that gravity is quantum; but within a \emph{local-mediator} framework, it \emph{requires} gravity to be quantum. 
This perspective bridges the phenomenological bipartite model and the field-theoretic description, where the mediator’s degrees of freedom correspond to those of the gravitational field itself.

\vspace{1em}
\noindent
\textbf{Relation to the bipartite assumptions.}
The tripartite picture preserves the same \emph{operational} content as the bipartite one, while treating gravity as a third party. Concretely, the operational assumptions of \emph{independent preparation} and \emph{independent measurability} are now required to hold for each of the three parties (cf.\ Assumptions~\ref{aspt:prep-ind} and~\ref{aspt:comm-op}). Within the algebraic description, these operational requirements are then \emph{implemented} by the corresponding state-space and algebraic conditions on the chosen subalgebras (in particular, the existence of compatible global extensions and, when invoked, commutativity). The additional finite-dimensionality/completeness assumption (Assumption~\ref{aspt:finite}) remains purely structural and is what underwrites the use of an explicit tensor-product decomposition for the tripartite Hilbert space. Finally, the \emph{uncorrelated initial preparation} assumption (Assumption~\ref{aspt:uncorr}) is what gives the ``mediator'' inference its bite: if the initial global state carries no correlations between the two masses, then any later entanglement detected between them must be generated dynamically. 

\subsection{Field-theoretic description}\label{sec:BMV_field}

The tripartite description, in which gravity appears as a local mediating subsystem, is most naturally understood as an effective truncation of an underlying field-theoretic description, where the mediator’s degrees of freedom are those of the gravitational field itself. This perspective closely follows the strategy originally adopted in the BMV proposals~\cite{Bose,Bose_Supplementary_2017}, and has since become the standard framework for embedding the quantum-information-theoretic model \cite{Huggett:2022uui,Fragkos:2022tbm}. In this approach, the local couplings $\hat H_{iG}$ of the tripartite model arise as approximations to the standard matter--gravity interaction derived from linearized general relativity.

We start from a background spacetime with metric \(g_{\mu\nu}=\eta_{\mu\nu}+ \kappa h_{\mu\nu}\), where $\kappa^2 = 32\pi G,$ \(\eta_{\mu\nu}\) is the Minkowski metric and \(h_{\mu\nu}\) a small perturbation, subsequently quantized. Expanding the Einstein--Hilbert action to quadratic order and adopting the de~Donder gauge leads to the free-field Hamiltonian~\cite{Fragkos:2022tbm}
\begin{equation}
\hat H_G=\frac{1}{2}\!\int\! d^3k\,\hbar\omega_k
\left[
\hat b^{\dagger}_{\mu\nu}(\vec k)\hat b^{\mu\nu}(\vec k)
-\frac{1}{2}\hat b^{\dagger\mu}{}_{\mu}(\vec k)\hat b^{\nu}{}_{\nu}(\vec k)
\right],
\end{equation}
with \(\omega_k=c|\vec k|\) and \(\hat b_{\mu\nu}(\vec k)\), \(\hat b^\dagger_{\mu\nu}(\vec k)\) the graviton annihilation and creation operators. The coupling to matter follows from the usual interaction term
\begin{equation}
\hat H_{\text{int}}
=-\frac{\kappa}{2}\!\int\! d^3r\,\hat h^{\mu\nu}(\vec r)\,\hat T_{\mu\nu}(\vec r)\,,
\end{equation}
where \(\hat T_{\mu\nu}\) is the stress--energy tensor of the matter fields. In the nonrelativistic limit only the \(T_{00}\) component contributes,
\begin{equation}
\hat H_{\text{int}}
\simeq
-\frac{\kappa}{2}\!\int\! d^3r\,\hat h^{00}(\vec r)\,\hat T_{00}(\vec r)\,.
\end{equation}

To describe the two test masses, we model them as excitations of nonrelativistic scalar fields, with localized creation and annihilation operators, \(a^\dagger_{i,\xi}\) and \(a_{i,\xi}\), associated with the left and right branches of each interferometer, \(\xi\in\{L,R\}\). Introducing a finite quantization volume \(V\), so that \(\vec k\) takes discrete values\footnote{The discrete sum over momenta arises from quantizing the field in a finite auxiliary volume \(V\) with periodic boundary conditions. This is a standard normalization device: in the physical limit \(V\!\to\!\infty\), the sum becomes \(\sum_{\vec k}\to \tfrac{V}{(2\pi)^3}\int d^3k\), recovering the continuum of gravitational modes required for the Newtonian \(1/r\) potential.}, the volume-normalized Hamiltonian then reads~\cite{Bose_2022}

\begin{equation}
H =
\sum_{i,\xi} m_i c^2\, a^\dagger_{i,\xi} a_{i,\xi}
+ 
\sum_{\vec k} 
\hbar\omega_k\, b^\dagger_{\vec k} b_{\vec k}
-\hbar
\sum_{i,\vec k,\xi}
g_{i,\vec k}\,
a^\dagger_{i,\xi} a_{i,\xi}
\!\left(
b_{\vec k}\,e^{i\vec k\cdot \vec r_{i,\xi}}
+
b^\dagger_{\vec k}\,e^{-i\vec k\cdot \vec r_{i,\xi}}
\right),
\label{eq:ham-field}
\end{equation}
where $b^\dagger_{\vec{k}}$ creates an excitation of an \emph{effective scalar mediator mode} capturing the relevant Newtonian ($h_{00}$) sector in the weak-field, nonrelativistic regime, and $
g_{i,\vec k}= m_i c^2\sqrt{\frac{2\pi G}{\hbar c^3 k V}}\,.$
This Hamiltonian is the standard starting point used in field-theoretic embeddings of the BMV phase generation: it should be understood as an effective truncation of linearized gravity in which the mediator degrees of freedom retained are those that reproduce the static $1/r$ potential between nonrelativistic sources (rather than the radiative, transverse--traceless graviton polarizations). 

Once again, the total initial state is assumed to be fully uncorrelated,
\begin{equation}
\rho(0)=\rho_1\otimes\rho_2\otimes\rho_G\,,
\end{equation}
with $\rho_G=\ket{\gamma_0}\!\bra{\gamma_0}$ representing the gravitational vacuum or background state prior to its interaction with the two masses. After an interaction time $\tau$, the joint state of matter and gravity can be written schematically as
\begin{equation}
\ket{\Psi(\tau)} 
   = \frac{1}{2}\!
     \sum_{\xi,\xi'}
     \ket{\xi}_1\ket{\xi'}_2
     \!\bigotimes_{\vec k}\!
     e^{\,i\Phi_{\vec k,\xi\xi'}(\tau)} 
     \ket{\alpha_{\vec k,\xi,\xi'}}_G\,,
\end{equation}
where each $\ket{\alpha_{\vec k,\xi,\xi'}}$ is a coherent state of the gravitational field mode~$\vec{k}$ and the accumulated phase $\Phi_{\vec k,\xi\xi'}(\tau)$. The product over all $\vec k$ induces an effective entangling phase between the two matter branches that reproduces the gravitationally induced phase shifts $\Delta \phi_{\xi,\xi'}$ of the bipartite description~\cite{Bose_Supplementary_2017}.

\vspace{1em}
\noindent
\textbf{Relation to the bipartite and tripartite assumptions.}
At a formal level, the field-theoretic description seems to inherit the same \emph{operational} assumptions introduced in the bipartite and tripartite models: that the subsystems can be prepared independently (Assumption~\ref{aspt:prep-ind}), that the relevant observables are jointly measurable in the required sense (Assumption~\ref{aspt:comm-op}), and that the initial global state can be taken uncorrelated (Assumption~\ref{aspt:uncorr}). In the algebraic framework, these operational requirements are typically \emph{implemented} by suitable choices of subsystem algebras and state space. In addition, the use of an explicit Hilbert-space tensor product for matter and gravity implicitly relies on further structural input of the kind captured by Assumption~\ref{aspt:finite}.

However, unlike in the finite-dimensional quantum-mechanical setting, none of these assumptions is guaranteed once gravity is treated as a dynamical gauge field. In particular, the matter and gravitational algebras are infinite-dimensional, and gauge-invariant observables require gravitational dressing, which generically obstructs exact commutativity and hence the algebraic implementation for commensurability. As a result, while the field-theoretic construction above provides a controlled and widely adopted approximation scheme, the independence assumptions underlying the quantum-information-theoretic argument cannot be expected to hold exactly in the full theory. These issues will be analyzed in detail in the following sections.
\section{Notions of Independence}\label{section:notions_indep}

To assess the conceptual and mathematical underpinnings of the BMV proposals, and in particular Assumptions~\ref{aspt:prep-ind}--\ref{aspt:finite}, we now shift from a \emph{TPS-first} formulation to an \emph{algebra-first} one. This is motivated by two related considerations. First, the relevant gravitational interaction is most naturally treated as a (linearized) gauge field, for which gauge invariance is naturally encoded in terms of operator algebras rather than tensor factors. Second, the independence assumptions invoked in BMV-style arguments concern preparation and measurement capabilities, and in the context of field theory, should therefore be formulated without presupposing that a preferred tensor-product decomposition exists or is physically meaningful. For this reason, we adopt the language of algebraic quantum theory, and in particular algebraic quantum field theory  AQFT, where subsystem structure is represented by nets of von Neumann algebras and where one can cleanly separate \emph{operational} requirements (e.g.\ local preparability, no-signalling, independent preparations) from the \emph{algebraic} conditions often used to implement them (e.g.\ commutativity, split inclusions, and product-state
extension properties). The aim of this section is therefore not to introduce new physical assumptions, but to recast the independence notions already used in the proposal into a mathematically precise and more general form, and to make explicit the relations between operational requirements and their algebraic implementations.

Throughout this section, we describe physical systems in terms of algebras of observables and their states. Given a Hilbert space $\hilbert$, the algebra of all bounded operators is denoted $\mathcal{B}(\hilbert)$. The physically accessible observables of the system are described by a von Neumann (vN) algebra $\alga \subseteq \mathcal{B}(\hilbert)$, also called a $W^*$-algebra\footnote{All results discussed here admit counterparts for $C^*$-algebras. In this paper, we restrict attention to $W^*$-algebras. A $W^*$-algebra is a $C^*$-algebra that is closed in the weak operator topology (equivalently, in the strong operator topology for $*$-subalgebras of $\mathcal B(\mathcal H)$). In finite dimensions, the distinction between $C^*$- and $W^*$-algebras disappears \cite{redei2006quantumprobabilitytheory}.},
where \emph{observables} correspond to positive self-adjoint elements of $\alga$, while more general observables are constructed from these basic elements through functional calculus and algebraic combinations.
A state on $\alga$ is a positive, normalized linear functional $\omega:\alga \to \mathbb{C}$ assigning expectation values to observables\footnote{Since our focus lies on preparation and correlation properties, we restrict attention to states rather than more general completely positive maps.}. In the quantum-information setting of Section~\ref{sec:BMV}, states were represented by density operators on a Hilbert space; the algebraic notion adopted here generalizes this description and allows us to treat quantum-mechanical and field-theoretic systems on the same footing.

In nonrelativistic quantum mechanics, subsystem algebras are typically type~I factors\footnote{Factors are defined in Appendix \ref{appendix:extratheorems}.}. In this case, commutativity of subalgebras, together with finite-dimensionality and completeness assumptions, leads directly to a TPS of the Hilbert space \cite{Zanardi:2004zz}. By contrast, in AQFT the local algebras associated with spacetime regions are infinite-dimensional type~III vN algebras \cite{R_dei_2009}. As a result, subsystem independence can no longer be reduced to simple Hilbert-space factorization based on commutativity alone, and must instead be characterized directly at the algebraic level. 

We consider two subsystems $S_1$ and $S_2$ of a global physical system $S$, whose observables are described by a vN algebra $\alga \subseteq \mathcal{B}(\hilbert)$. We denote by $\alga_1,\alga_2 \subseteq \alga$ the observable algebras associated with $S_1$ and $S_2$. With this notation in place, we can now introduce precise algebraic notions of independence between subsystems, beginning with statistical independence.

\begin{definition}[Statistical independence \cite{summers1990}]
\label{def:Statistical independence}
A pair $(\alga_1,\alga_2)$ of vN subalgebras of a vN algebra $\alga$ is said to be statistically independent if, for every pair of normal states
$\omega_1:\alga_1 \to \mathbb{C}$ and $\omega_2:\alga_2 \to \mathbb{C}$,
there exists a normal state
$\omega:\alga_1 \vee \alga_2 \to \mathbb{C}$
such that
\begin{equation}
\omega|_{\alga_1} = \omega_1,
\qquad
\omega|_{\alga_2} = \omega_2\,.
\label{eq:statind}
\end{equation}
The state $\omega$ is called a \emph{normal joint extension} of $\omega_1$ and $\omega_2$.\footnote{Note that the restriction to normal states is essential here: one may define an analogous but weaker notion of independence for $C^*$-algebras, known as \emph{$C^*$-independence}, in which the states need not be normal \cite{redei2006quantumprobabilitytheory}. That is why statistical independence is referred to as $W^*$-independence in \cite{summers2009}. In this work, we focus exclusively on the $W^*$-algebraic setting. Extensions of this notion from states to completely positive maps lead to the concept of \emph{operational $W^*$-independence} \cite{summers1990,summers2009}, which we will not consider further.}
\end{definition}

Statistical independence is a \emph{state-space} (and hence algebraic) condition that can be \emph{read} as an implementation of the operational idea of independent local preparation: for any choice of local preparation procedures, represented by normal states on $\alga_1$ and $\alga_2$, there exists a global normal state on $\alga_1 \vee \alga_2$\footnote{The symbol $\alga_1 \vee \alga_2$ denotes the \emph{vN algebra generated} by $\alga_1$ and $\alga_2$, defined as $\alga_1 \vee \alga_2 := (\alga_1 \cup \alga_2)''$, where the double prime denotes the double commutant in $\mathcal{B}(\hilbert)$. By the double commutant theorem, $(\alga_1 \cup \alga_2)''$ is the smallest vN algebra containing both $\alga_1$ and $\alga_2$.} whose restrictions reproduce the given local statistics. Importantly, this condition does \emph{not} require the joint state to take a product form, nor does it preclude correlations between the subsystems. In the context of the BMV proposals, Assumption~\ref{aspt:prep-ind} is formulated operationally; when representing it within the algebraic framework, a natural sufficient implementation is precisely statistical independence in the sense of Definition~\ref{def:Statistical independence}.

The next notion of independence is \emph{statistical independence in the product sense}.

\begin{definition}[Statistical independence in the product sense \cite{summers1990}]
\label{def:Statistical independence in the product sense}
A pair $(\alga_1,\alga_2)$ of vN subalgebras of a vN algebra $\alga$ is said to be \emph{statistically independent in the product sense} if, for every pair of normal states
$\omega_1:\alga_1 \to \mathbb{C}$ and $\omega_2:\alga_2 \to \mathbb{C}$,
there exists a normal state
$\omega:\alga_1 \vee \alga_2 \to \mathbb{C}$
such that
\begin{equation}
\omega|_{\alga_1}=\omega_1,\qquad
\omega|_{\alga_2}=\omega_2,
\end{equation}
and
\begin{equation}
\omega(AB)=\omega_1(A)\,\omega_2(B)
\qquad
\forall\,A\in\alga_1,\; B\in\alga_2\,.
\end{equation}
The state $\omega$ is called a \emph{normal joint extension in product form}, or simply a \emph{normal product state}.\footnote{In the AQFT literature, this notion is referred to as \emph{$W^*$-independence in the product sense} \cite{summers1990,summers2009,redei2006quantumprobabilitytheory}. As in the previous case, the definition can be extended from states to completely positive maps, leading to the notion of \emph{operational $W^*$-independence in the product sense}.}
\end{definition}

Statistical independence in the product sense strengthens Definition~\ref{def:Statistical independence} by requiring not only mutual compatibility of arbitrary local preparations, but also the existence of a \emph{specific kind} of joint extension: an \emph{uncorrelated} one, in the precise sense that expectation values of joint observables factorize into a product of marginal expectations. This notion is therefore strictly stronger than statistical independence: it guarantees the existence of a joint extension of arbitrary local states \emph{and} imposes a constraint on the correlation structure of that extension. In the context of the BMV proposals, Assumption~\ref{aspt:uncorr} is formulated operationally as the possibility of an initial preparation with no correlations between the parties; when represented in the algebraic framework, a natural sufficient implementation is precisely statistical independence in the product sense.

Finally, as we already discussed, independence between subsystems can also be formulated at the level of measurements. This leads to the third notion of independence relevant for the BMV setting.

\begin{definition}[Commutativity]
\label{def:Commutativity}
Two subalgebras $\alga_1,\alga_2 \subseteq \alga$ are said to \emph{commute} if
\[
[A,B]=0
\qquad
\forall\,A\in\alga_1,\; B\in\alga_2\,.
\]
\end{definition}

Operationally, ``commensurability'' refers to the possibility of performing the relevant measurements jointly without mutual disturbance. In the algebraic framework, Definition~\ref{def:Commutativity} is a standard sufficient condition implementing this operational requirement. In the BMV proposals, this corresponds to Assumption~\ref{aspt:comm-op}.

\vspace{0.5em}
\noindent
State-preparation independence, absence of initial correlations, and measurement independence are three distinct aspects of subsystem independence that are all implicitly invoked in the BMV protocol. While these notions are closely implemented in standard finite-dimensional quantum mechanics (see below), they are not logically equivalent \cite{summers1990,summers2009}, and neither are the algebraic conditions commonly used to represent them. Specifically, statistical independence in the product sense is strictly stronger than statistical independence, and neither algebraic notion of preparation independence presupposes commutativity of the associated algebras\footnote{Although Definitions~\ref{def:Statistical independence} and \ref{def:Statistical independence in the product sense} do not assume commutativity, in many cases of interest, notably for factors/type~III algebras, the existence of normal product-state extensions is closely tied to split inclusions, which in turn entail $\alga_1\subseteq \alga_2'$., as discussed in Sec. \ref{ind_field_th}.}.

\subsection{Back to the Bipartite and Tripartite Models}

In finite-dimensional nonrelativistic quantum mechanics, the distinctions introduced above effectively disappear. Subsystem algebras are typically \textit{finite} type~I factors, and commutativity together with completeness assumptions suffices to induce a TPS of the global Hilbert space, as discussed in Sec.~\ref{sec:BMV}. Let's formally define it:

\begin{definition}[Tensor product structure \cite{Cotler:2017abq}] \label{def_TPS}
A \textit{TPS} of $\mathcal{H}$ is an equivalence class of isomorphisms $T : \mathcal{H} \rightarrow \bigotimes_{i=1}^n \mathcal{H}_i$ that factorize $\mathcal{H}$ into $n$ factors $\mathcal{H}_i$ of respective dimensions $d_i$, where two isomorphisms $T_1$ and $T_2$ are said to be equivalent (denoted $T_1 \sim T_2$) if $T_1 T_2^{-1}$ is a product of local unitaries $U_1 \otimes \dots \otimes U_n$ and arbitrary permutations of the factors.
\end{definition}
\noindent 
Once a TPS is available, the above operational notions of independence are promptly implemented. 

However, this conclusion relies essentially on finite dimensionality. Already for infinite-dimensional type~$\mathrm{I}_\infty$ factors, commutativity and completeness conditions no longer suffice to determine a TPS without additional assumptions\footnote{%
This reflects a genuine loss of rigidity in infinite dimensions. In the finite-dimensional case, a collection of mutually commuting type~I factors that jointly generate $\mathcal B(\mathcal H)$ uniquely fixes, up to unitary equivalence, a TPS in $\mathcal H$ \cite{Zanardi:2004zz}. By contrast, for type~$\mathrm{I}_\infty$ factors, the same algebraic data admit infinitely many inequivalent TPSs: an inclusion $\mathcal A\cong\mathcal B(\mathcal K)\subset\mathcal B(\mathcal H)$ determines $\mathcal H$ only up to a non-unique infinite-multiplicity decomposition $\mathcal H\cong\mathcal K\otimes\mathcal K'$. As a result, commuting subalgebras that generate the full operator algebra do not, by themselves, single out a canonical TPS \cite{KadisonRingrose1986}.}.
This already marks a first departure from the quantum-informational intuition based on Hilbert-space factorization.

From the perspective of the bipartite and tripartite models used in the BMV analysis, this observation is important: while those descriptions tacitly rely on a well-defined TPS to formulate independent preparation and measurement, uncorrelated states, and local couplings to a mediating subsystem, such a structure is not guaranteed beyond the finite-dimensional setting.

\subsection{Back to the field-theoretic description}\label{ind_field_th}

We now turn to the field-theoretic setting, where the subsystem structure underlying the BMV proposals is most naturally analyzed within AQFT. In this framework, physical systems are described by a \emph{net} of vN algebras $\{\alga(\mathcal O)\}$ indexed by spacetime regions $\mathcal O$, rather than by a fixed choice of abstract subalgebras. The interpretation of subsystems is therefore intrinsically (spatiotemporally) local.

For the purposes of the present discussion, the relevant AQFT axioms are:
(i) \emph{isotony}, which ensures that if $\mathcal O_1\subset\mathcal O_2$ implies the algebraic inclusion $\alga(\mathcal O_1)\subset\alga(\mathcal O_2)$; and
(ii) \emph{microcausality}, according to which algebras associated with spacelike separated regions commute,
\[
[\alga(\mathcal O_1),\alga(\mathcal O_2)]=0,
\qquad \text{for } \mathcal O_1 \perp \mathcal O_2 .
\]
In this setting, subalgebra inclusions and commutativity are no longer an independent assumption imposed on a pair of chosen subalgebras, but a structural consequence of spacetime geometry encoded in the net itself. 



As discussed, the local algebras $\alga(\mathcal O)$ are generically type~III, so even for spacelike separated regions there is typically no canonical Hilbert-space TPS and normal product-state extensions of arbitrary local preparations are not automatic. Recovering (even approximately) the tensor-product intuition underlying the bipartite and tripartite BMV models therefore requires additional algebraic input. A key role is played by intermediate type~I factors, which reinstate many familiar finite-dimensional constructions ``locally''. This motivates introducing the \emph{funnel property}, which precisely encodes the existence of such intermediate type~I structure in the AQFT setting.

\begin{definition}[Funnel Property \cite{summers2009}] \label{def:funnel}
The inclusion $\alga(U)\subset\alga(W)$ is said to have the funnel property if there exists a type~I factor $\algm$ such that
\[
\alga(U)\subset \algm \subset \alga(W).
\]
\end{definition}

Physically, the funnel property expresses the existence of a finite ``buffer'' or ``collar'' region between $U$ and $W$ that allows the isolation of degrees of freedom localized in $U$, in a way reminiscent of a tensor product decomposition, despite the underlying local algebras being of type~III.

When combined with microcausality, the funnel property has direct consequences for bipartite systems. Concretely, consider a second region $V$ that is spacelike separated from $W$ (and hence also from $U$). By microcausality, this geometric separation is equivalent to the algebraic commutation relation $[\alga(W),\alga(V)]=0$, which can be rewritten as $\alga(W)\subset \alga'(V)$. Then, if the inclusion $\alga(U)\subset\alga(W)$ has the funnel property, then
\[
\alga(U)\subset \algm \subset \alga'(V).
\]
\noindent 
This is an instance of the \textit{split property} for the pair $(\alga(U),\alga(V))$ in the case of local subalgebras associated with spacetime regions. However, it holds more generally:
\begin{definition} (Split property)
    A pair $(\alga_1,\alga_2)$ of von Neumann algebras is split if there exists a type~I factor $\algm$ such that $\alga_1\subset\algm\subset\alga_2'$\footnote{In \cite{doplicher_standard_1984}, the notation is for the pair $(\alga_1,\alga_2')$ instead, but we follow \cite{summers1990,summers2009,R_dei_2009}.}. 
    \label{def:split property}
\end{definition}

The split property is not only connected to commutativity, but also with statistical independence in the product sense, and the existence of a Hilbert-space TPS, as made precise by the following theorems.

\begin{theorem}(\cite[Thm.~5.2]{summers2009})
If the pair of vN algebras $(\alga_1,\alga_2)$ is split, then the pair is also statistically independent in the product sense. Moreover, if either of the algebras is type~$III$ or a factor, then the pair is statistically independent in the product sense if and only if it is split.\footnote{This theorem was originally given for operational independence, but the notion of operational independence discussed here was proven to be equivalent to statistical independence in the product sense \cite{R_dei_2009}.}
\label{th:ifsplitthenopidpdt}
\end{theorem}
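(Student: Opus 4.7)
The plan is to prove the two implications separately, and throughout to exploit the fact that every type~I factor $\algm \cong \mathcal{B}(\mathcal{K})$ is spatially isomorphic, up to unitary equivalence, to one half of a tensor-product decomposition of the ambient Hilbert space. This is the structural ingredient that lets a normal product-state extension be built (for the forward direction) or recovered (for the converse) whenever an intermediate type~I factor is present.

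For the forward direction, I would assume the pair is split, so there exists a type~I factor $\algm$ with $\alga_1 \subset \algm \subset \alga_2'$. Using the type~I structure, I would invoke the standard unitary identification $\hilbert \simeq \mathcal{K}_1 \otimes \mathcal{K}_2$ under which $\algm \simeq \mathcal{B}(\mathcal{K}_1)\otimes \idm$ and $\algm' \simeq \idm \otimes \mathcal{B}(\mathcal{K}_2)$. Since $\alga_1\subset \algm$ and $\alga_2 \subset \algm'$ (the latter because $\algm\subset\alga_2'$ entails $\alga_2\subset\algm'$), the two subalgebras lie on opposite sides of this tensor factorization. Given arbitrary normal states $\omega_1$ on $\alga_1$ and $\omega_2$ on $\alga_2$, I would first extend them to normal states $\tilde\omega_1$ on $\mathcal{B}(\mathcal{K}_1)$ and $\tilde\omega_2$ on $\mathcal{B}(\mathcal{K}_2)$, which is routine since any normal state on a vN subalgebra of a type~I factor admits a density-operator representation and hence a normal extension. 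The product state $\tilde\omega_1 \otimes \tilde\omega_2$ on $\mathcal{B}(\mathcal{K}_1\otimes\mathcal{K}_2)$ is then normal, and restricting it to $\alga_1\vee\alga_2$ produces a normal product extension of $(\omega_1,\omega_2)$, establishing statistical independence in the product sense.

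For the converse, I would assume that for every pair of normal states a normal product extension exists and that (say) $\alga_1$ is type~III or is a factor. I would first construct a faithful normal product state $\omega$ on $\alga_1\vee\alga_2$, combining the existence of faithful normal states on each algebra (standard under countable decomposability, which covers the cases of interest) with the product-extension hypothesis. The GNS triple $(\pi_\omega,\hilbert_\omega,\Omega)$ then carries a cyclic and separating vector whose state factorizes on products $AB$; this multiplicative structure allows one to build a unitary identification of $\hilbert_\omega$ with a tensor product, on which $\pi_\omega(\alga_1)$ and $\pi_\omega(\alga_2)$ act as commuting subalgebras of the two tensor factors, and to extract an intermediate type~I factor $\algm_\omega$ in the GNS representation. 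The type~III/factor hypothesis then enters decisively: it allows me to transport this GNS-level splitting back to the original representation via spatial implementability of normal $*$-isomorphisms (Takesaki-type results on the uniqueness of the standard form, or equivalently the fact that normal representations of type~III factors are unitarily equivalent to amplifications of each other), ensuring that the type~I factor survives as a genuine intermediate algebra in $\mathcal{B}(\hilbert)$ sandwiched between $\alga_1$ and $\alga_2'$.

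The main obstacle is the converse. The forward implication is essentially bookkeeping once the tensor-product representation of a type~I factor is in hand. The difficulty of the converse lies precisely in promoting the abstract product-state datum to a concrete intermediate type~I factor inside $\mathcal{B}(\hilbert)$: this transport requires structural rigidity that is absent for general pairs but is guaranteed when one of the algebras is type~III or a factor, through the spatiality and uniqueness results for their normal representations. I would therefore expect most of the proof work to concentrate on verifying that this spatial extension is well-defined and that the resulting subalgebra is indeed of type~I, with the product-extension axiom used only as the initial data feeding into the GNS construction.
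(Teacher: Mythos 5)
The paper does not prove this theorem; it is imported verbatim as Thm.~5.2 of \cite{summers2009}, so your proposal can only be measured against the standard literature argument. Your forward direction is that argument and is essentially complete: a type~I factor $\algm\subset\mathcal{B}(\hilbert)$ is spatially of the form $\mathcal{B}(\mathcal{K}_1)\otimes\idm$ under some unitary $\hilbert\simeq\mathcal{K}_1\otimes\mathcal{K}_2$, with $\algm'\simeq\idm\otimes\mathcal{B}(\mathcal{K}_2)$; the inclusion $\algm\subset\alga_2'$ gives $\alga_2\subset\algm'$, and since every normal state on a von Neumann algebra acting on $\mathcal{K}_i$ is induced by a density operator on $\mathcal{K}_i$, the state $\tr\big((\rho_1\otimes\rho_2)\,\cdot\,\big)$ restricted to $\alga_1\vee\alga_2$ is the required normal product extension.

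The converse is where the content lies, and your sketch has three concrete gaps. First, the GNS step tacitly uses $[\alga_1,\alga_2]=0$: the isometry sending $\pi_\omega(AB)\Omega$ to $\pi_{\varphi_1}(A)\Omega_1\otimes\pi_{\varphi_2}(B)\Omega_2$ rests on $\omega(B^*A^*A'B')=\varphi_1(A^*A')\,\varphi_2(B^*B')$, which requires commuting $B^*$ past $A^*A'$; commutativity is not listed among the hypotheses of the converse, so it must either be taken as a standing assumption or derived before this step makes sense. Second, faithfulness of the product extension $\omega$ on $\alga_1\vee\alga_2$ does not follow from faithfulness of its marginals unless one already knows $\alga_1\vee\alga_2\simeq\alga_1\bar{\otimes}\alga_2$, which is close to what is being proved; the literature handles this by working with a jointly cyclic and separating vector supplied as part of the setup rather than by asserting faithfulness of an arbitrary product extension. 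Third, and most seriously, the ``transport back'' is the entire theorem and your argument asserts its conclusion rather than proving it: the identity representation of $\alga_1\vee\alga_2$ and $\pi_\omega$ are in general related by an amplification followed by a compression by a projection $E'\in(\alga_1\vee\alga_2)'$, and $E'$ need not commute with the GNS-level type~I factor (which sits inside $\alga_2'$ but not inside $\alga_1\vee\alga_2$, so $(\alga_1\vee\alga_2)'$ is not contained in its commutant); compressing by $E'$ therefore does not even yield an algebra. The properly-infinite/type~III hypothesis is consumed precisely in showing that this compression can be avoided, i.e.\ that the two normal representations can be taken unitarily equivalent, and a complete proof must spell out that mechanism rather than cite ``spatial implementability'' generically.
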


\begin{theorem}(\cite[Thm.~4.1]{summers2009})
For a mutually commuting pair $(\alga_1,\alga_2)$ of von Neumann algebras acting on a Hilbert space $\hilbert$, the following are equivalent.
\begin{enumerate}\itemsep0pt\parskip0pt\parsep0pt\topsep0pt\partopsep0pt 
    \item The pair $(\alga_1,\alga_2)$ is split;
    \item The pair $(\alga_1,\alga_2)$ is $W^*$-independent in the spatial product sense, i.e.\ the map
    \[
    AB \longmapsto A\otimes B, \qquad A\in\alga_1,\; B\in\alga_2,
    \]
    extends to a spatial isomorphism of $\alga_1\vee\alga_2$ with $\alga_1\bar{\otimes}\alga_2$\footnote{The $W^*$-tensor product between two von Neumann algebras $\alga_1$ and $\alga_2$ is denoted $\alga_1\bar{\otimes}\alga_2$, meaning that the operators contained in each subalgebra act on distinct factors of the Hilbert space $\hilbert$.}, i.e.\ there exists a unitary operator $U:\hilbert\rightarrow\hilbert\otimes\hilbert$ such that $UABU^*=A\otimes B$ for all $A\in\alga_1$ and $B\in\alga_2$.
\end{enumerate}
\label{th:equivalentpairsplitspatialisomorphism}
\end{theorem}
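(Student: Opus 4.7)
The strategy is to prove the two implications separately, each exploiting the fact that every type~I factor $\algm \cong \mathcal{B}(\mathcal{K})$ acting on $\hilbert$ induces a canonical Hilbert-space decomposition $\hilbert \cong \mathcal{K} \otimes \mathcal{K}'$ in which $\algm$ realizes as $\mathcal{B}(\mathcal{K}) \otimes \idm$ and $\algm'$ as $\idm \otimes \mathcal{B}(\mathcal{K}')$.

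For $(1)\Rightarrow(2)$, I would assume the pair is split with intermediate type~I factor $\algm$, so $\alga_1 \subset \algm$ and (using $\algm \subset \alga_2'$) $\alga_2 \subset \algm'$. In the induced factorization $\hilbert \cong \mathcal{K} \otimes \mathcal{K}'$, each $A \in \alga_1$ realizes as $\tilde{A} \otimes \idm$ and each $B \in \alga_2$ as $\idm \otimes \tilde{B}$, so products factorize as $AB = \tilde{A} \otimes \tilde{B}$ directly in the spatial tensor product. Extending by linearity to $\mathrm{span}\{AB : A\in\alga_1,\,B\in\alga_2\}$ and then by normal (ultraweak) continuity yields a spatial $*$-isomorphism $\alga_1 \vee \alga_2 \to \alga_1 \bar{\otimes} \alga_2$, implemented by the unitary identifying $\hilbert$ with $\mathcal{K} \otimes \mathcal{K}'$.

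For $(2)\Rightarrow(1)$, I would let $U:\hilbert \to \hilbert \otimes \hilbert$ be a unitary with $UABU^* = A \otimes B$. Setting $B = \idm$ gives $U \alga_1 U^* \subset \mathcal{B}(\hilbert) \otimes \idm$; setting $A = \idm$ gives $U \alga_2 U^* \subset \idm \otimes \mathcal{B}(\hilbert)$. Then $\algm := U^* (\mathcal{B}(\hilbert) \otimes \idm) U$ is a type~I factor of $\mathcal{B}(\hilbert)$ satisfying $\alga_1 \subset \algm$; its commutant $\algm' = U^* (\idm \otimes \mathcal{B}(\hilbert)) U$ contains $\alga_2$, so $\algm \subset \alga_2'$, which is precisely the split condition.

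The main obstacle lies in $(1)\Rightarrow(2)$. Commutativity $[\alga_1, \alga_2] = 0$ alone does \emph{not} guarantee that $AB \mapsto A \otimes B$ is well-defined on $\alga_1 \vee \alga_2$: generically one only has a surjection $\alga_1 \bar{\otimes} \alga_2 \twoheadrightarrow \alga_1 \vee \alga_2$, and nontrivial relations in the generated algebra can collapse the tensor product. The role of the split property is precisely to prevent such a collapse: since $\alga_1 \subset \algm$ and $\alga_2 \subset \algm'$ with $\algm \cap \algm' = \mathbb{C} \idm$, there is enough ``room'' between the algebras to effect a strict Hilbert-space separation. Promoting the resulting pointwise factorization to a global normal isomorphism of vN algebras is then a standard technical step relying on the normality of spatial tensor products together with the ultraweak density of $\mathrm{span}\{AB\}$ in $\alga_1 \vee \alga_2$.
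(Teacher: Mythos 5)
The paper does not actually prove this statement---it is imported verbatim from Summers---so there is no in-text argument to compare against; I assess your proof on its own terms. Your strategy is the standard one: the split property is precisely the statement that the pair sits between a type~I factor and its commutant, and a type~I factor on $\hilbert$ is the algebraic shadow of a Hilbert-space tensor factorization. Your $(2)\Rightarrow(1)$ direction is complete and correct: pulling back $\mathcal B(\hilbert)\otimes\idm$ through $U$ yields the interpolating type~I factor, and the commutant computation $\algm'=U^*(\idm\otimes\mathcal B(\hilbert))U\supseteq\alga_2$ closes it. Your diagnosis of where the content of $(1)\Rightarrow(2)$ lies---that commutativity alone does not make $AB\mapsto A\otimes B$ well defined, and that the split property supplies the spatial separation that does---is also right; and once the factorization $\hilbert\cong\mathcal K\otimes\mathcal K'$ adapted to $\algm$ is in hand, the extension to $\alga_1\vee\alga_2$ is automatic, since conjugation by the unitary $W:\hilbert\to\mathcal K\otimes\mathcal K'$ is already a normal $*$-isomorphism carrying $\alga_1\vee\alga_2$ onto $\tilde{\alga}_1\bar{\otimes}\tilde{\alga}_2$; no separate density or continuity argument is needed there.

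The one step you have not closed is the passage from what your construction actually delivers---a unitary $W:\hilbert\to\mathcal K\otimes\mathcal K'$ with $W(AB)W^*=\tilde A\otimes\tilde B$ for the \emph{reduced} operators---to the literal conclusion demanded by the statement, a unitary $U:\hilbert\to\hilbert\otimes\hilbert$ with $UABU^*=A\otimes B$ for the \emph{original} operators. Since $A=W^*(\tilde A\otimes\idm_{\mathcal K'})W$ and $B=W^*(\idm_{\mathcal K}\otimes\tilde B)W$, the operator $A\otimes B$ on $\hilbert\otimes\hilbert$ is unitarily equivalent to $\tilde A\otimes\tilde B\otimes\idm_{\mathcal K'\otimes\mathcal K}$, i.e.\ to an amplification of your representation with multiplicity $\dim(\mathcal K'\otimes\mathcal K)$; identifying this with the multiplicity-one copy on $\mathcal K\otimes\mathcal K'$ needs an extra argument and in fact \emph{fails} in finite dimensions: for $\hilbert=\mathbb C^2\otimes\mathbb C^2$ with $\alga_1=M_2\otimes\idm$ and $\alga_2=\idm\otimes M_2$ the pair is trivially split, yet no unitary from $\mathbb C^{4}$ to $\mathbb C^{16}$ exists. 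This is as much a defect of the theorem as transcribed here---Summers' version lives under standing hypotheses (separable $\hilbert$, the properly infinite setting of local algebras) under which the multiplicities can be matched---as of your argument, but a complete proof should either invoke those hypotheses explicitly when upgrading $W$ to $U$, or state conclusion (2) in the weaker, multiplicity-free form ``there exist $\hilbert_1,\hilbert_2$ and a unitary $\hilbert\to\hilbert_1\otimes\hilbert_2$ carrying $\alga_1$ into $\mathcal B(\hilbert_1)\otimes\idm$ and $\alga_2$ into $\idm\otimes\mathcal B(\hilbert_2)$,'' which is exactly what your construction proves.
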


As $W^*$-statistical independence in the spatial product sense is strictly stronger than $W^*$-statistical independence in the product sense \cite{DANTONI1983361,R_dei_2009}, these theorems show that the split property plays a dual role:  it is the condition under which statistical independence in the product sense is guaranteed, and it implies a Hilbert-space factorization where each algebra can be embedded in a different tensor factor. Thus, in quantum field theory, statistical independence in the product sense cannot be assumed on the basis of microcausality alone. The relations between the different notions of independence introduced above are summarized in Figure~\ref{graph:small}.

Therefore, the split property acts as the central link between the different notions of independence. If a pair of vN algebras fails to commute, it cannot be split, and this obstructs having a TPS. In such cases, either normal product states do not exist, or the notion of subsystem separation cannot be implemented at the Hilbert-space level. By contrast, statistical independence without the product requirement is the weakest notion considered here and the only one that does not presuppose commutativity; it may therefore still hold even in the absence of a split inclusion. We revisit this in the context of gravity in the Discussions. 

\tikzset{
  block/.style={
    rectangle, draw, rounded corners=2pt,
    fill=gray!10,
    align=center,
    font=\small,
    text width=10.8em,
    minimum height=3.7em,
    inner sep=3pt,
    drop shadow={opacity=.85, shadow xshift=1.3pt, shadow yshift=-1.3pt}
  },
  rel/.style={-Stealth, thick},
  rel2/.style={Stealth-Stealth, thick},
  lab/.style={
    font=\footnotesize,
    fill=white,
    inner sep=1.2pt,
    fill opacity=0.9,
    text opacity=1
  }
}

\begin{figure}[H]
  \centering
  \begin{tikzpicture}[node distance = 3cm, auto]
 \node [block] (split) {Split property \\ (Def.  \ref{def:split property})};
     \node [block, left of = split, node distance = 6cm] (structureH) {Tensor Product Structure \\(Def. \ref{def_TPS})};
     \node [block, right of = split, node distance = 6cm] (commensurability) {Commutativity \\ (Def.  \ref{def:Commutativity})};
     \node[block] (funnel) at ($(split)!0.5!(commensurability) + (0,3cm)$)
{Funnel Property \\ (Def. \ref{def:funnel})};
    \node [block, below of = split, node distance = 3cm] (statidpdproduct) {Statistical independence in the product sense \\ (Def. \ref{def:Statistical independence in the product sense})};
   \node [block, right of  = statidpdproduct, node distance = 6cm] (statidpdt) {Statistical independence \\ (Def.  \ref{def:Statistical independence})};
   
 \draw[rel2] (split) -- node[midway, above=2pt, sloped] {Thm. \ref{th:equivalentpairsplitspatialisomorphism}}(structureH);
    \draw [rel] (split) --(commensurability);
    \draw [rel] (split) --node[midway, left] {Thm. \ref{th:ifsplitthenopidpdt}} (statidpdproduct);
    \draw [rel2]([xshift=0.625cm]split.south) --node[midway, right] {If type III/factor} ([xshift=0.625cm]statidpdproduct.north);
      \draw [rel] (statidpdproduct) -- (statidpdt);
\coordinate (fjoin) at ($(funnel.south)+(0,-0.8cm)$);
\draw[thick] (funnel.south) -- (fjoin);
\draw[thick] (fjoin) -| (commensurability.north);
\draw[rel] (fjoin) -| (split.north);
  \end{tikzpicture}
  \caption{Summary of the properties introduced in this section, and their respective relations.}
  \label{graph:small}
\end{figure}
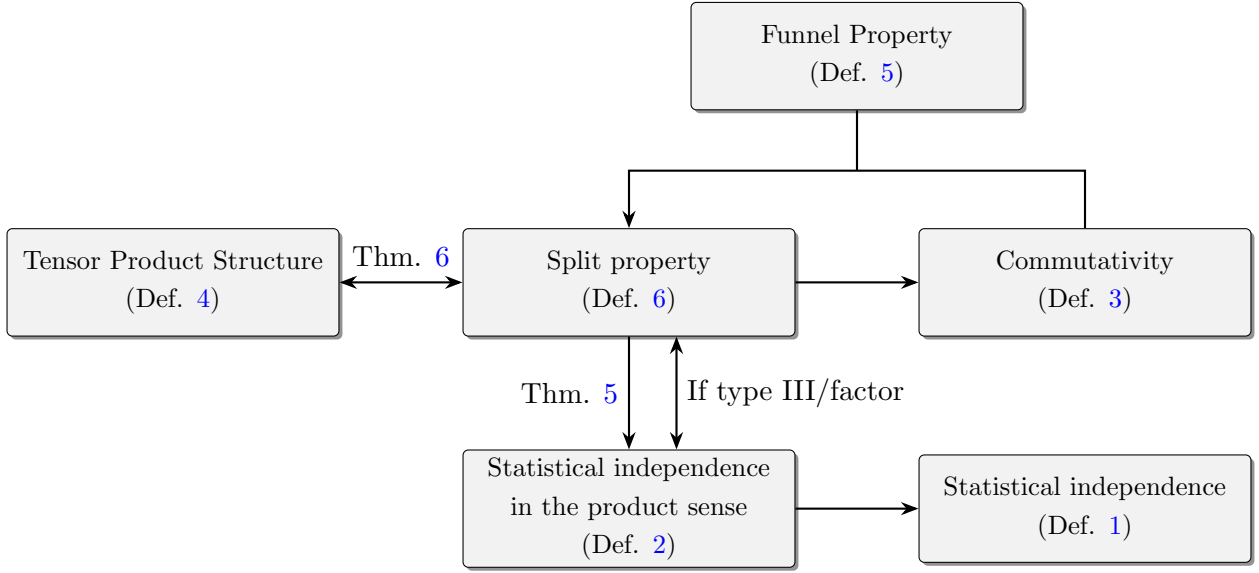

\paragraph{Are the required algebraic conditions obtained in linearized quantum gravity?}
This section shows that, within a field-theoretic description, statistical independence in the product sense can indeed be recovered—but only under rather stringent algebraic conditions. In particular, it is the joint assumption of microcausality and the funnel property that allows one to reconstruct the independence assumptions tacitly invoked in the BMV proposals, including the existence of uncorrelated product states and a TPS for spacelike separated subsystems.

However, this entire construction hinges on a crucial premise: that the physical subsystems under consideration admit a description in terms of local vN algebras that \textit{actually} satisfy these properties. Thus, before one can meaningfully appeal to the split property to justify preparation and measurement independence, one must ask whether the relevant (gauge-invariant) gravitational observables even generate commuting local algebras to begin with, and whether such algebras can admit split inclusions at all. In other words, do the algebraic conditions required to get the BMV logic off the ground survive once the gauge nature of gravity is taken seriously?

The next section addresses precisely this issue. By examining the role of gravitational dressing and gauge constraints, we will question whether the local subalgebras implicitly assumed in Sec. \ref{sec:BMV_field} can be consistently defined, and whether microcausality—and with it the split property—can be expected to hold, even approximately, for physically meaningful observables in quantum gravity.

\section{Dressed states and Commutativity}\label{section:Dressed states and Commutativity}

In quantum field theory, physical observables are represented by gauge-invariant Hermitian operators acting on a Hilbert space. In gauge theories, local matter fields are typically not gauge invariant and therefore do not define observables by themselves. A standard procedure to construct gauge-invariant operators is to \emph{dress} such fields \cite{Kibble:1968sfb,Kulish:1970ut,DonnellyFreidel2016LocalSubsystems,Giddings2019SoftChargesSplitting,Giddings_2006,Bagan_2000}. While dressing restores gauge invariance, it generically modifies the localization properties of the resulting operators and, in particular, their commutation relations. This is precisely where the algebraic implementation of commensurability via commutativity might be obstructed. As we will see, this can be avoided in Yang-Mills theories, but it seems to become unavoidable in gravity. We review the situation in QED and gravity following \cite{Donnelly:2015hta,Donnelly:2016rvo}.

\subsection{QED}

Consider QED coupled to a complex scalar field $\phi$ of charge $q$,
\begin{equation}
    \mathcal{L}_{\mathrm{QED}}
    =
    -\frac{1}{4}F^{\mu\nu}F_{\mu\nu}
    -\frac{1}{2\alpha}(\partial_\mu A^\mu)^2
    -|D_\mu\phi|^2
    -m^2|\phi|^2 ,
\end{equation}
with $D_\mu=\partial_\mu-iqA_\mu$, and $\alpha$ the gauge fixing parameter. The theory is invariant under the gauge transformations
\begin{equation}
    A_\mu(x)\rightarrow A_\mu(x)-\partial_\mu\Lambda(x),
    \qquad
    \phi(x)\rightarrow e^{-iq\Lambda(x)}\phi(x),
\end{equation}
where we assume $\Lambda(x)\to 0$ at spatial infinity.

The matter field $\phi$ is not gauge invariant and therefore does not define a physical observable in the interacting theory. Gauge-invariant operators can be constructed by dressing the field,
\begin{equation}
    \Phi(x)=V(x)\phi(x),
\end{equation}
where the dressing $V(x)$ transforms oppositely to $\phi(x)$. A convenient class of dressings is given by
\begin{equation}
    V(x)=\exp\!\left(
    iq\!\int\! d^4x'\, f^\mu(x,x')A_\mu(x')
    \right),
\end{equation}
with $\partial'_\mu f^\mu(x,x')=\delta^{(4)}(x-x')$. Different choices of $f^\mu$ correspond to different physical dressings (Coulomb, line, or Lorenz dressings), and describe different photon clouds accompanying the charged excitation.

Although $\Phi(x)$ is gauge invariant, it is not strictly local: typical dressings extend to infinity.
Accordingly, gauge-invariant \emph{charged} operators generally do \emph{not} satisfy microcausality in the strict AQFT sense.
For example, for Dirac/Coulomb-type dressings one finds nontrivial commutators once one considers unequal-time commutators (equivalently, $\,[\dot\Phi,\Phi]$ at equal time), with the leading behavior controlled by the Coulomb interaction energy \cite{Donnelly:2015hta,Donnelly:2016rvo}.

By contrast, genuinely local gauge-invariant operators with compact support (e.g.\ Wilson loops, or neutral ``mesonic'' operators with both endpoints contained in a bounded region) \emph{can} be associated with commuting subalgebras for spacelike separated supports. This result follows from the fact that the dressing is constructed from the gauge field $A_\mu$, which is itself the canonically quantized variable. Consequently, at equal times, all contributions to the commutator vanish. In this sense, QED/Yang--Mills admits local observable algebras obeying microcausality, but \emph{single-charge} dressed fields are intrinsically nonlocal and need not commute at spacelike separation.


\subsection{Gravity}

We now turn to gravity coupled to a scalar field $\phi$ of mass $m$,
\begin{equation}
    \mathcal{L}
    =
    \frac{2}{\kappa^2}R
    -\frac{1}{\alpha\kappa^2}\frac{\sqrt{|g^0|}}{\sqrt{|g|}}\frac{1}{|g|}
    \left[\nabla^0_\mu\!\left(\sqrt{|g|}g^{\mu\nu}\right)\right]^2
    -\frac{1}{2}\big[(\nabla\phi)^2+m^2\phi^2\big],
\end{equation}
where $\kappa^2=32\pi G$, $g_{\mu\nu}$ is the spacetime metric, and $\alpha$ the gauge fixing parameter. Working in perturbative quantum gravity, we expand around a fixed Minkowski background,
\begin{equation}
    g_{\mu\nu}=\eta_{\mu\nu}+\kappa h_{\mu\nu}.
\end{equation}

As in QED, the scalar field $\phi$ is not gauge invariant under diffeomorphisms and therefore does not by itself define an observable. Gauge-invariant operators can be constructed by \emph{gravitationally dressing} $\phi$, i.e.\ by defining a relationally localized field
\begin{equation}
    \Phi(x)=\phi(x)+V^\mu(x)\,\partial_\mu\phi(x)+\mathcal{O}(V^2),
\end{equation}
where $V^\mu(x)$ is chosen so that $\Phi(x)$ is invariant under linearized diffeomorphisms. At the linearized level one may formally write
\begin{equation}
    V^\mu(x)=\kappa\!\int\! d^4x'\, f^{\mu\nu\lambda}(x,x')\,h_{\nu\lambda}(x')\,,
    \qquad
    2\,\partial'_\nu f^{\mu\nu\lambda}(x,x')=\delta^{(4)}(x-x')\,\eta^{\mu\lambda}\,,
\end{equation}
which guarantees that $V^\mu$ transforms as the required spacetime displacement. However, unlike in QED, the kernel $f^{\mu\nu\lambda}$ cannot in general be taken to be strictly local: diffeomorphism-invariant dressings necessarily involve nonlocal functionals of the metric perturbation, reflecting the fact that gravitationally dressed operators must carry their associated long-range field.

A concrete illustration is provided by the (linearized) gravitational Wilson-line dressing \cite{Donnelly:2015hta},
\begin{equation}
V_\mu(x)=\frac{\kappa}{2}\int_{x}^{\infty} dx'^{\nu}\left[
h_{\mu\nu}(x')+2\int_{x'}^{\infty} dx''^{\lambda}\,\partial_{[\mu}h_{\nu]\lambda}(x'')
\right],
\end{equation}
where the integration is along an arbitrary curve from $x$ to spatial infinity. Physically, this dressing corresponds to a ``gravitational string'' extending to infinity. The crucial structural difference with Yang--Mills dressings is the appearance of derivatives of the metric perturbation in $V_\mu$. As a result, gravitationally dressed fields generically have nonvanishing equal-time commutators even for spacelike separated arguments \cite{Donnelly:2015hta,Donnelly:2016rvo,franzmann2024be}. Indeed, already at order $\kappa^2$, one finds for the Coulomb dressing (obtained by averaging the Wilson line over all directions) that \cite{Donnelly:2015hta}
\begin{equation}
    [\Phi(x),\Phi(x')]
    =
    -\frac{i\kappa^2\hbar}{64\pi c^4}
    \big[
    \dot\phi(x)\partial_i\phi(x')
    +
    \partial_i\phi(x)\dot\phi(x')
    \big]
    \frac{x^i-x'^i}{|x-x'|}.
\end{equation}
Similarly, the commutator between the field and its time derivative,
\begin{equation}
    [\dot\Phi(x),\Phi(x')]
    =
    -\frac{i\kappa^2\hbar}{32\pi c^4|x-x'|}
    \dot\phi(x)\dot\phi(x'),
\end{equation}
is nonvanishing at equal times. We emphasize that this holds for representative dressings: while some commutator structures can be modified by changing the dressing, in particular, the first commutator can be made to vanish, no dressing can make the second commutator vanish. 

The physical consequence is that the gauge-invariant algebras generated by dressed mass operators associated with spacelike separated regions fail to satisfy microcausality. As a result, the commensurability required to define independent subsystems through the split property fails, and the factorization of the Hilbert space, as well as the existence of uncorrelated states assumed in gravitationally induced entanglement experiments, becomes problematic. 

There is substantial evidence that genuinely gauge-invariant \emph{local} observables are generically unavailable in quantum gravity beyond the context of the linearized theory. In closed universes, diffeomorphism-invariant observables cannot be built as local functions of the canonical variables \cite{Torre:1993fq}. Even when classical gravitational observables can be arranged to be microcausal, this property is expected to fail once quantum effects are included \cite{Marolf:2015jha}. Moreover, operators carrying nonzero Poincar\'e charges must be gravitationally dressed, rendering the corresponding gauge-invariant operators intrinsically nonlocal and obstructing sharply localized subsystem definitions \cite{Donnelly:2016rvo,Giddings:2018cjc}. For further references, see \cite[section \textit{Observables} and references therein]{deBoer:2022zka} and \cite{EmerGe_proj_1}.

\subsection{Failure of the split property in linearized gravity}
\label{subsec:loss-einstein-separability}

The results of the previous section have consequences that go beyond the failure of microcausality for specific dressed observables. They directly affect the algebraic implementation required to model the two test masses as independent quantum subsystems in the sense presupposed by the BMV/GIE protocols. Importantly, the dressed operators we considered are not strictly localized in bounded regions (they are relationally defined with an asymptotic anchor). The point is therefore not that the abstract AQFT net axioms fail, but that the \emph{gauge-invariant} subalgebras naturally associated with these dressed degrees of freedom need not commute at spacelike separation, obstructing their use as commuting ``local'' subsystem algebras in the BMV/GIE sense.

As reviewed in Sec.~\ref{section:notions_indep}, the existence of a TPS for a bipartite (or tripartite) system is not a primitive notion in quantum field theory. In the algebraic setting, a TPS compatible with the observable content of the theory can be recovered if the relevant subsystem algebras satisfy the split property. Only then can one represent the global algebra as a tensor product in a Hilbert space, allowing for product states, and the standard quantum-information-theoretic description of subsystems.


The breakdown of microcausality for gravitationally dressed observables therefore has immediate consequences: they cannot form a split pair. As a result, no Hilbert-space tensor factorization exists that is compatible with the gauge-invariant observable algebras. In particular, the notion of a product state of the form $\rho_1 \otimes \rho_2$---or, equivalently, a normal product state extending arbitrary local preparations---seems to be obstructed at the fundamental level. Thus, once gravitational dressing is taken into account, subsystem independence becomes, at best, an approximate notion, valid only insofar as violations of commutativity can be neglected; this dovetails with the effective-gravity viewpoint that diffeomorphism-invariant observables are intrinsically nonlocal and reproduce local QFT only approximately in suitable states/limits (``pseudo-locality'') \cite{Giddings_2006}. We will come back to this point in Sec.~\ref{sec:microcausality_violation}.

\vspace{0.5em}
\noindent
\textbf{Consequences for Bell-type reasoning.}
Much of the quantum-information-theoretic interpretation of BMV/GIE-style experiments is phrased in terms of \emph{entanglement generation} between two parties and, more broadly, in terms of LOCC/Bell-style constraints that presuppose a clean bipartite subsystem structure. Since the split property and exact commutativity are precisely the ingredients that underwrite such a structure in the algebraic setting, it is natural to ask what becomes of Bell-type correlation bounds once these ingredients fail.

At this point, the standard Bell framework no longer strictly applies. Bell inequalities, and in particular the CHSH inequality, presuppose a bipartite structure with well-defined local observables acting on commuting subalgebras, as well as the existence of product states and independent local measurements. When these assumptions fail, Bell inequalities cannot be interpreted in their usual way as constraints on correlations between independent subsystems.

Nevertheless, one may still ask whether meaningful bounds on observable correlations survive once commutativity is relaxed. Rather than diagnosing nonlocality in the sense of local hidden variable theories, such generalized Bell-type inequalities probe the algebraic structure of correlations themselves, and the extent to which noncommutativity modifies their maximal strength. This motivates a careful re-examination of Bell inequalities—and in particular of Tsirelson’s bound—from the perspective of noncommuting observable algebras, which we undertake in the next section.

\section{Bell inequalities: the CHSH case}\label{sec:bell}

BMV-type experiments are \emph{not} designed as Bell tests: they aim to detect entanglement via suitable
\emph{entanglement witnesses}, i.e.\ observables whose expectation values certify nonseparability without
necessarily excluding local hidden-variable (LHV) models. Hence, entanglement witnesses are (in general)
strictly weaker than Bell inequalities: a Bell violation implies entanglement, but not conversely.

That said, Bell inequalities and entanglement witnesses are structurally close. In particular, every Bell
inequality can be associated with an entanglement witness \cite{Terhal_2000}, and both may be viewed as
constraints separating separable from entangled correlations. For our purposes, the CHSH inequality provides
a clean benchmark: it comes with a sharp classical bound (for LHV models) and a sharp quantum bound
(Tsirelson's bound \cite{cirelson_1980}).

The point of this section is therefore \emph{diagnostic}, not interpretive: we do not reinterpret BMV as a Bell
experiment. Rather, we ask which steps in the CHSH/Tsirelson analysis rely on the \emph{algebraic}
implementation of commensurability, namely commutativity ($[A_i,B_j]=0$), and what remains meaningful once
this assumption is relaxed, as suggested by gravitational dressing. This will later inform how robust standard entanglement-witness bounds are beyond the usual
tensor-product assumption.

\subsection{CHSH inequality and Tsirelson's bound under commensurability}

Consider two parties $A$ and $B$ with two observables each, $A_1,A_2$ and $B_1,B_2$
(typically taken self-adjoint with spectrum in $\{\pm1\}$, or more generally contractions with $\|A_i\|\le 1$,
$\|B_j\|\le 1$). If the observed correlations admit an LHV model \cite{PhysicsPhysiqueFizika.1.195}, then they
satisfy the Clauser--Horne--Shimony--Holt (CHSH) inequality \cite{1969PhRvL..23..880C},
\begin{equation}
    \big|\langle A_1B_1\rangle + \langle A_1B_2\rangle
    + \langle A_2B_1\rangle - \langle A_2B_2\rangle\big|
    \leq 2 .
\end{equation}
Quantum mechanics can violate this bound for suitable choices of measurements on entangled states, thereby
excluding LHV models (see e.g.\ \cite{1969PhRvL..23..880C,PhysRevD.14.1944,PhysRevLett.49.1804}).

On the quantum side, it is convenient to package the CHSH expression into the CHSH operator
\begin{equation}
    E := A_1B_1 + A_1B_2 + A_2B_1 - A_2B_2 .
    \label{eq:b}
\end{equation}
Under the standard subsystem assumptions, in particular, that $A$-observables commute with $B$-observables, 
$E$ is self-adjoint whenever the $A_i$ and $B_j$ are, and one can derive Tsirelson's bound
\begin{equation}
    \|E\| \le 2\sqrt{2}\,,
\end{equation}
equivalently $\langle E\rangle \le 2\sqrt{2}$ for all quantum states. This derivation uses commutativity in an
essential way (both algebraically and conceptually): it guarantees that joint observables such as $A_iB_j$ are
well-defined as observables of the composite system, and it matches the usual tensor-product picture (where
$A_i=A_i\otimes I$ and $B_j=I\otimes B_j$).

\subsection{Loss of commutativity and a symmetrized CHSH observable}

As discussed in Sec.~\ref{section:Dressed states and Commutativity}, gravitational dressing generically obstructs exact commutativity even for spacelike separated degrees of freedom. In that situation, the usual CHSH expression is no longer guaranteed to define a self-adjoint observable: if $A_i$ and $B_j$ are self-adjoint but do not commute, the products $A_iB_j$ need not be self-adjoint.

A minimal way to retain the CHSH structure while ensuring self-adjointness is to replace operator products by their Jordan (symmetrized) product,
\begin{equation}
    X\circ Y := \tfrac12(XY+YX)\,,
\end{equation}
which is self-adjoint whenever $X$ and $Y$ are\footnote{For bounded operators $A,B\in\mathcal B(\mathcal H)$, the Jordan product $A\circ B$ is self-adjoint whenever
$A$ and $B$ are self-adjoint, and reduces to the ordinary product when $[A,B]=0$.
Here it is used purely as a formal symmetrization to build self-adjoint expressions from noncommuting factors;
we do not assume any specific operational measurement protocol for $A\circ B$ in the generic noncommuting case.}.
This leads to the symmetrized CHSH observable
\begin{equation}
    E_\circ \;:=\; A_1\circ B_1 + A_1\circ B_2 + A_2\circ B_1 - A_2\circ B_2
    \;=\;\frac{E+E'}{2}\,,
    \label{eq:Esym}
\end{equation}
where $E'=B_1A_1+B_2A_1+B_1A_2-B_2A_2$ is the CHSH expression with reversed ordering. By construction, $E_\circ$ is self-adjoint for self-adjoint $A_i,B_j$, and reduces to the usual CHSH operator $E$ in the commuting case.

The key question is then whether the usual Tsirelson bound survives in this noncommuting setting. The following proposition shows that it does: once the CHSH expression is reformulated in terms of observables via the Jordan product, violations of commutativity do \emph{not} generate stronger-than-quantum correlations.

\begin{prop}\label{prop:jordan-tsirelson}
Let $A_1,A_2,B_1,B_2$ be self-adjoint observables satisfying $A_i^2=B_j^2=I$. Without assuming commensurability (i.e.\ without requiring $[A_i,B_j]=0$), the symmetrized CHSH operator
\begin{equation}
    E_\circ \;=\; A_1\circ B_1 + A_1\circ B_2 + A_2\circ B_1 - A_2\circ B_2
    \;=\;\frac{E+E'}{2}
\end{equation}
satisfies the bound
\begin{equation}
    E_\circ \;\leq\; 2\sqrt{2}\, I\,,
\end{equation}
where $E=A_1B_1+A_1B_2+A_2B_1-A_2B_2$ and $E'=B_1A_1+B_2A_1+B_1A_2-B_2A_2$.
\end{prop}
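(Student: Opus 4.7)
The plan is to establish the operator inequality $E_\circ \leq 2\sqrt{2}\,I$ by bounding $\omega(E_\circ)\leq 2\sqrt{2}$ for every state $\omega$, which suffices because $E_\circ$ is self-adjoint. I deliberately avoid any attempt to expand $E_\circ^2$ in closed form, since without $[A_i,B_j]=0$ one obtains mixed-commutator terms that do not simplify to a useful expression. Instead, I follow a Landau-type argument based purely on the operator Cauchy--Schwarz inequality for states, which at no point invokes commensurability between the $A_i$ and the $B_j$.

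First, I would regroup the CHSH operator and its adjoint as $E=A_1(B_1+B_2)+A_2(B_1-B_2)$ and $E^*=(B_1+B_2)A_1+(B_1-B_2)A_2$, so that
\[
\omega(E_\circ)\;=\;\tfrac{1}{2}\omega(E+E^*)\;=\;\operatorname{Re}\omega\!\bigl(A_1(B_1+B_2)\bigr)+\operatorname{Re}\omega\!\bigl(A_2(B_1-B_2)\bigr).
\]
Next, the operator Cauchy--Schwarz inequality for a state, $|\omega(XY)|^2\leq\omega(X^2)\omega(Y^2)$ for Hermitian $X,Y$, applied with $X=A_i$ and $Y=B_1\pm B_2$, together with $A_i^2=I$, yields
\[
\omega(E_\circ)\;\leq\;\sqrt{\omega\!\bigl((B_1+B_2)^2\bigr)}+\sqrt{\omega\!\bigl((B_1-B_2)^2\bigr)}.
\]
Finally, the elementary inequality $\sqrt{a}+\sqrt{b}\leq\sqrt{2(a+b)}$ combined with the algebraic identity $(B_1+B_2)^2+(B_1-B_2)^2=2(B_1^2+B_2^2)$ and $B_j^2=I$ bounds the right-hand side by $\sqrt{2\cdot 4}=2\sqrt{2}$. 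Repeating the argument for $-E_\circ$ gives the two-sided operator bound.

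The step that I expect to demand the most care, more as a matter of verification than as a genuine technical obstacle, is confirming that no manipulation secretly reintroduces $[A_i,B_j]=0$. The rewriting of $E$ preserves operator ordering and is therefore valid regardless of commutation; the Cauchy--Schwarz estimate only uses positivity of $\omega$ and Hermiticity of the operators involved; and the identity $(B_1+B_2)^2+(B_1-B_2)^2=2(B_1^2+B_2^2)$ holds because the anticommutator cross terms $\pm(B_1B_2+B_2B_1)$ cancel upon summation, independently of $[B_1,B_2]$. The $A$-side commutators never enter because Cauchy--Schwarz isolates the $A_i$ into the squares $A_i^2=I$. This confirms that the $2\sqrt{2}$ bound persists even when the subalgebras generated by $\{A_1,A_2\}$ and $\{B_1,B_2\}$ fail to commute, which is precisely the regime relevant to gravitationally dressed observables.
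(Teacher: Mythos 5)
Your proof is correct, and it takes a genuinely different route from the one in the paper. The paper proves the bound by exhibiting an explicit sum-of-squares decomposition at the operator level,
\[
E_\circ=\tfrac{1}{\sqrt{2}}\bigl(A_1^2+A_2^2+B_1^2+B_2^2\bigr)-\tfrac{\sqrt2-1}{8}\sum_{k=1}^4 F_k^2,
\]
for suitable self-adjoint linear combinations $F_k$ of the $A_i$ and $B_j$; positivity of the $F_k^2$ then yields $E_\circ\le 2\sqrt2\,I$ as a direct algebraic identity, with the $F_k^2$ quantifying the slack. You instead run the Landau-style argument: write $\omega(E_\circ)=\operatorname{Re}\omega(E)$, regroup $E=A_1(B_1+B_2)+A_2(B_1-B_2)$, apply the Cauchy--Schwarz inequality for positive functionals with $A_i^2=I$, and close with $\sqrt a+\sqrt b\le\sqrt{2(a+b)}$ and the parallelogram identity $(B_1+B_2)^2+(B_1-B_2)^2=2(B_1^2+B_2^2)=4I$. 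Every step is valid without commensurability, exactly as you check: the ordering-preserving regrouping, the Cauchy--Schwarz estimate (which needs only positivity of $\omega$ and self-adjointness), and the cancellation of the $B_1B_2+B_2B_1$ cross terms are all insensitive to $[A_i,B_j]$ and $[B_1,B_2]$. Since the bound holds for every state and $E_\circ$ is self-adjoint, the operator inequality follows; note that the proposition only asserts the upper bound, so your final remark about repeating the argument for $-E_\circ$ is unnecessary (though harmless). The trade-off: your argument is more elementary and transparently commutativity-free, and extends immediately to contractions $\|A_i\|,\|B_j\|\le 1$; the paper's decomposition is a self-contained operator identity that makes the positive remainder explicit but requires verifying a nontrivial algebraic rearrangement.
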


\begin{proof}
We rewrite $E_\circ$ as a difference of Jordan squares. A direct algebraic rearrangement yields
\begin{equation}
\begin{split}
    E_\circ
    &=A_1\circ B_1+A_1\circ B_2+A_2\circ B_1-A_2\circ B_2\\
    &=\frac{1}{\sqrt{2}}\Big(A_1\circ A_1+A_2\circ A_2+B_1\circ B_1+B_2\circ B_2\Big)
    -\frac{\sqrt2-1}{8}\sum_{k=1}^4 F_k\circ F_k\,,
\end{split}
\end{equation}
where $F_{1}=(\sqrt2+1)(A_1-B_1)+A_2-B_{2},\, F_{2}=(\sqrt2+1)(A_1-B_2)-A_{2}-B_{1},\,F_{3}=(\sqrt2+1)(A_2-B_1)+A_{1}+B_{2},\,F_{4}=(\sqrt2+1)(A_2+B_2)-A_{1}-B_{1}$ are sum of observables and therefore observables themselves.

For any self-adjoint $X$, one has $X\circ X=X^2\ge 0$. Using $A_i^2=B_j^2=I$, we therefore obtain
\begin{equation}
    E_\circ
    =\frac{1}{\sqrt{2}}\big(A_1^2+A_2^2+B_1^2+B_2^2\big)
    -\frac{\sqrt2-1}{8}\sum_{k=1}^4 F_k^2
    \;\le\;\frac{1}{\sqrt{2}}(4I)
    \;=\;2\sqrt2\,I\,.
\end{equation}
\end{proof}

Proposition~\ref{prop:jordan-tsirelson} shows that the Tsirelson limit is \emph{robust}: even when $[A_i,B_j]\neq 0$, the symmetrized CHSH observable $E_\circ$ cannot exceed $2\sqrt2$ in quantum theory. This is not a trivial recovery of the commuting case, however. When the subsystem algebras commute, $E_\circ=E$ and we recover the standard CHSH operator. When they do not commute, by contrast, neither ordering $A_iB_j$ nor $B_jA_i$ defines a joint observable in general; only their symmetrized combination does. In this sense, the Jordan product implements a formal ``average'' over incompatible orderings enforced by the algebraic structure itself.

This feature invites comparison with frameworks of indefinite causal order \cite{Chiribella_2013,Oreshkov_2012}: noncommutativity makes the very notion of a definite measurement ordering problematic, and $E_\circ$ is the self-adjoint expression obtained by symmetrizing the two orderings. The comparison should nevertheless be handled with care: here the symmetrization is not introduced as an operational superposition of causal orders, but as a purely algebraic device to recover an observable expression from noncommuting factors.

Finally, we stress an important limitation for the BMV context. While CHSH-type correlation bounds remain mathematically meaningful in this symmetrized form, the resulting observable $E_\circ$ does not by itself provide an operationally clear entanglement witness in the sense required by BMV: it is not, in general, implementable as a sequence of local measurements nor as a spacelike separated measurement protocol, and we do not assume any measurement prescription for $A\circ B$ beyond formal self-adjointness. Thus, the persistence of Tsirelson's bound does not automatically guarantee the robustness of the specific entanglement witnesses used in BMV proposals once commensurability fails.

This shifts the emphasis of what is experimentally significant. If gravitational dressing obstructs the very definition of standard local correlation observables, then the most direct signature is not an exotic modification of CHSH bounds, but the \emph{violation of commutativity} (and hence of microcausality) itself. In the next section, we therefore turn to the possibility of probing dressing-induced violations of microcausality and to estimating their magnitude in realistic parameter regimes.
\section{Toward experimental bounds on microcausality violation}\label{sec:microcausality_violation}

The previous section showed that relaxing commensurability does not, by itself, lead to stronger-than-quantum correlations: once the CHSH expression is reformulated as an \emph{observable} via the Jordan product, Tsirelson's bound remains $2\sqrt{2}$. The upshot is not a recovery of the standard subsystem picture, but a separation of issues. Correlation \emph{bounds} can be mathematically robust under noncommutativity, whereas the \emph{operational} meaning of entanglement witnesses in BMV-type proposals relies on the availability of commuting subalgebras (and, effectively, a tensor-product subsystem structure) to interpret measured spin correlations as joint local observables.

This motivates a shift in emphasis. If gravitational dressing obstructs microcausality at the algebraic level, then a more direct probe of the underlying structure is the violation of microcausality itself, i.e.\ the appearance of nonvanishing commutators between \emph{gauge-invariant} observables associated with spacelike separated regions. Unlike standard entanglement-witness reasoning, such a diagnostic does not presuppose a tensor factorization or preparation independence.

In what follows, we estimate the size of dressing-induced microcausality violations using parameters relevant for current BMV-type proposals.

\subsection*{Parametric estimate for Coulomb dressing}

We consider the Coulomb dressing introduced in Sec.~\ref{section:Dressed states and Commutativity}. Following \cite{Bose}, we adopt a nonrelativistic regime in which a scalar field $\phi$ can be written as
\begin{equation}
     \phi(x)=\frac{\lambda}{\sqrt{2m}}\,e^{-i m c^2 t/\hbar}\,\psi(x),
\end{equation}
where $\lambda$ is a renormalization constant and $\psi(\mathbf x)\sim (2\pi L^2)^{-3/4}\exp\!\left(-\frac{|\mathbf x|^2}{4L^2}\right)$ is a slowly varying spatial wavefunction of unit norm. In this approximation,
\begin{equation}
    \dot{\phi}(x) = -\frac{i m c^2}{\hbar}\,\phi(x),
    \qquad
    \partial_i \phi(x) \sim \frac{1}{L}\,\phi(x).
\end{equation}

Using the experimental values proposed in \cite{Bose}, namely $L=10^{-6}\,\mathrm{m}$ and $m=10^{-14}\,\mathrm{kg}$, the dressed-field commutators evaluated in Sec.~\ref{section:Dressed states and Commutativity} can be expressed as \emph{relative} estimates by factoring out the local field amplitudes. Using the nonrelativistic scalings $\partial_i\phi \sim L^{-1}\phi$ and, if one keeps the fast rest-energy phase, $\dot\phi \sim (mc^2/\hbar)\phi$, we obtain
\begin{align}
\frac{\big|[\Phi_C(x),\Phi_C(x')]\big|}{|\phi(x)\phi(x')|}
&\sim \frac{G m}{c^2 L}
\;\approx\;10^{-36},
\label{eq:commutator1}
\\[1ex]
\frac{\big|[\dot{\Phi}_C(x),\Phi_C(x')]\big|}{|\phi(x)\phi(x')|}
&\sim \frac{G m}{c^2 L}\,\frac{m c^2}{\hbar}
\;\approx\;
10^{-36}\,\frac{m c^2}{\hbar}
\;\approx\;
10\,\mathrm{s}^{-1},
\label{eq:commutator2}
\end{align}
up to factors of order unity. Here the ratios should be understood as order-of-magnitude estimates for matrix elements in wavepacket states.

The first commutator is therefore extremely suppressed, and moreover can be made to vanish altogether by an appropriate choice of relational observable/dressing \cite{Donnelly:2015hta}, as discussed in Sec.~\ref{section:Dressed states and Commutativity}. At this level one may thus expect the \emph{approximation} of commuting subalgebras---and with it the use of split-type reasoning and product-state idealizations---to remain adequate for many practical purposes, despite being strictly invalid for the fully gauge-invariant dressed observables.

The second commutator is qualitatively different. First, the quantity in Eq.~\eqref{eq:commutator2} is a \emph{rate}: whether it is ``large'' or ``small'' must be assessed relative to a characteristic timescale $\tau$ of the protocol (e.g.\ the interferometric hold time), through the dimensionless combination
\begin{equation}
\epsilon(\tau)\;:=\;\tau\,\frac{\big|[\dot{\Phi}_C(x),\Phi_C(x')]\big|}{|\phi(x)\phi(x')|}\,.
\end{equation}
Second, the order-of-unity-per-second estimate in Eq.~\eqref{eq:commutator2} hinges on keeping the fast rest-energy phase in $\phi$, i.e.\ $\dot\phi\simeq -(imc^2/\hbar)\phi$ with $mc^2/\hbar\simeq 8.5\times 10^{36}\,\mathrm{s}^{-1}$ for $m=10^{-14}\,\mathrm{kg}$. In a strictly nonrelativistic effective description one typically factors out the rest-energy phase and works with slowly varying fields; heuristically, the relevant time dependence is then governed by \emph{slow} frequencies $\omega_{\rm NR}\sim E_{\rm NR}/\hbar$. For instance, a wavepacket of width $L$ has a kinetic scale $E_{\rm kin}\sim \hbar^2/(2mL^2)$ and hence $\omega_{\rm kin}\sim \hbar/(2mL^2)\simeq 5\times 10^{-9}\,\mathrm{s}^{-1}$ for $m=10^{-14}\,\mathrm{kg}$ and $L=10^{-6}\,\mathrm{m}$, which would instead yield
\begin{equation}
\frac{\big|[\dot{\Phi}_C(x),\Phi_C(x')]\big|}{|\phi(x)\phi(x')|}
\sim \frac{Gm}{c^2L}\,\omega_{\rm kin}
\sim 4\times 10^{-44}\,\mathrm{s}^{-1}.
\end{equation}
Even adopting a much larger slow scale $\omega_{\rm NR}\sim 1/\tau$ with $\tau$ of order seconds yields only $\sim 10^{-36}\,\mathrm{s}^{-1}$. In either case, the violation is entirely negligible in magnitude, which is consistent with the excellent validity of linearized quantum gravity in this regime.

Crucially, however, unlike the first commutator, this second commutator cannot in general be made to vanish by a choice of dressing: it reflects the structural nonlocality enforced by the gravitational constraints for gauge-invariant observables. Thus, even if its numerical size is minuscule, it represents a \emph{formal} obstruction to exact microcausality, and hence to the strict microcausal inclusion $\mathcal{A}(\mathcal{O}_A)\subset \mathcal{A}(\mathcal{O}_B)'$ required to even formulate a split inclusion for spacelike separated regions. This leaves open the question of how best to formulate an \emph{effective} split property, and recovering effective notions of statistical independence. 

Finally, we emphasize that microcausality violation in the present sense does not, by itself, necessarily imply superluminal signalling: a weak version of no-signalling can remain satisfied even when microcausality fails. We discuss different notions of no-signalling and their connection to commutativity in App.~\ref{app:nosignalling}.

\section{Discussion}

A central message of this paper is that the standard quantum-information narrative behind BMV-style inference is \emph{not} conceptually automatic once one insists on gauge-invariant observables for gravity. In the bipartite and tripartite idealizations, the two masses are treated as independently addressable subsystems, with well-defined ``local'' operations and (at least approximate) factorization or commensurability. By contrast, in a diffeomorphism-invariant setting, the relevant observables must be gravitationally dressed, and the resulting gauge-invariant subalgebras need not commute even at spacelike separation. This obstructs implementing exact commensurability and blocks the usual route from microcausality to split inclusions, tensor-product factorization, and statistical independence. 

Recent ``quantumness of gravity'' proposals continue to frame the key inference in explicitly or implicitly LOCC terms. In \cite{bose2025spinbasedpathwaytestingquantum}, the authors adopt the familiar entanglement-witness/Bell-style rationale in which the two masses (and their spins) function as independently addressable parties coupled only through a gravitational mediator, while \cite{Lami_2024} goes further by \emph{defining} ``classical gravity'' as dynamics simulable by LOCC and proposing an LOCC inequality test that does not rely on entanglement generation. Meanwhile, \cite{chen2024quantumeffectsgravitynewton} is complementary in spirit: they stress that Newton-potential GIE alone leaves loopholes and propose targeting genuinely field-theoretic quantum signatures tied to the gravitational canonical commutator structure. Nonetheless, in all three cases, the storyline still presupposes an exact subsystem structure. This does not undermine the practical viability of these schemes in the weak-field regime (the induced noncommutativities may be parametrically tiny), but it does qualify the logical force of the step ``observation $\Rightarrow$ gravity is quantum'': the inference is made \emph{within} an effective truncation whose subsystem-independence assumptions are not automatic once diffeomorphism-invariant observables and constraints are treated carefully.

While our main emphasis has been on how gravitational dressing induces violations of microcausality, thereby obstructing exact commensurability and undermining the usual route to the split property and product-state independence, one might still hope that a weaker form of statistical independence remains viable. Raju’s ~\cite{Raju:2021lwh} analysis suggests that this hope might fail. In a diffeomorphism-invariant theory, Gauss-law--type constraints already correlate ``inside'' and ``outside'' data for any bounded region: global charges measured at infinity (notably the ADM charges) are not merely boundary bookkeeping devices but constrain the admissible bulk configurations. Unlike in electromagnetism, where compensating charges can be localized in a collar region to support an effective separation, gravity admits no negative-energy charges that could be used to screen local excitations in this way. As emphasized in~\cite{Raju:2021lwh}, this leads to a sharp obstruction to independent state preparation in full generality: because the gravitational constraints tie bulk data to asymptotic charges, one cannot in general specify fine-grained ``inside'' and ``outside'' data independently in the way the split-property intuition would suggest. If statistical independence breaks down at this level, then the stronger product form of independence invoked in GIE-style inference schemes is \emph{a fortiori} untenable, and any operational use of subsystem independence must be understood as an effective, regime-dependent approximation rather than a fundamental structural feature.

That regime dependence is borne out by our explicit estimates. In the parameter ranges relevant for current BMV-type proposals, the dressing-induced commutators are numerically extremely small, which supports the practical adequacy of modelling the experiment with approximately commuting subalgebras and split-type reasoning at the quantum-information level. At the same time, the field-theoretic analysis shows a qualitative distinction: while some commutator structures can be eliminated by an appropriate choice of dressing, others cannot. Even when their magnitude is minuscule, such nonvanishing commutators therefore represent a formal obstruction to exact microcausality and hence to the strict split-type inclusions that would underwrite product states and subsystem-independence at the fundamental level. This tension motivates treating notions such as ``split property'' and ``independent subsystems'' as \emph{effective} concepts in weak-field gravity, and it leaves open the question of how best to formulate and justify an \emph{effective} split property in general.

In this sense, the present results simultaneously support and qualify the mainstream viewpoint on GIE: they support it by showing that, for realistic parameters, the induced violations of microcausality are tiny enough that the quantum-information modelling can remain an excellent approximation; and they qualify it by stressing that strong ontological conclusions (e.g.\ about ``superposed spacetime'' or fundamental subsystem structure) rely on formal properties---exact factorization, strict commensurability, and robust independence assumptions---that are not preserved once gauge-invariant gravitational observables are taken seriously. More speculatively, the fact that factorization itself is undermined suggests that gravity may affect not only how we \emph{test} quantumness, but also how we should \emph{think} about subsystems and measurement in regimes where diffeomorphism constraints are inescapable~\cite{franzmann2024be}. This strengthens the motivation to complement entanglement-witness logic with more direct, subsystem-agnostic probes of the underlying algebraic structure, such as experimentally bounding dressing-induced violations of microcausality themselves.

Finally, it is worth stressing that the conceptual lessons drawn here are not specific to tabletop gravity-induced entanglement experiments, but extend to other settings where gravity is treated as a perturbative quantized field. In particular, the field-theoretic structures probed in GIE proposals are closely analogous to those underlying the standard treatment of cosmological perturbations during inflation~\cite{Fragkos:2022tbm,franzmann2024be}. In both cases, one works with quantized fluctuations around a classical background and with gauge-invariant, dressed observables, for which strict microcausality and exact Hilbert-space factorization are subtle and, at best, approximate notions; indeed, the same algebraic scrutiny applied here would be expected to yield violations of commutativity that are at least as (if not more) significant in the early Universe regime, given the fully relativistic character and larger amplitudes of the perturbations. This invites a provocative re-reading of the usual ``causal disconnection'' premise behind the horizon problem: if spacelike separated, gauge-invariant observables need not strictly commute, then the sharp subsystem picture in which distinct Hubble patches are taken to be independent is itself an idealization, and part of what inflation is meant to explain---the persistence of correlations across apparently causally disconnected regions---may already be conceptually softened within perturbative quantum gravity. At the same time, and as stressed by our discussion of no-signalling versus commutativity (see App. \ref{app:nosignalling}), nonvanishing spacelike commutators do not by themselves furnish an operational influence channel or an equilibration mechanism, so they might not simply replace inflation \emph{as a mechanism}; rather, they highlight that the locality assumptions underwriting the horizon-problem narrative are effective, regime-dependent approximations whose status should be assessed \textit{within} the same framework used to model cosmological perturbations.

\section*{Acknowledgements}

GF is thankful for financial support from the Olle Engkvist Foundation (no.225-0062) and the Swedish Research Council
(grant number 2022-01893VR). GF is also thankful to all members of the \href{https://emerge-collab.org/}{EmerGe Collaboration}, with whom he learned and discussed many of the formal aspects relevant to this paper. NB is thankful for financial support from the Department of Mathematics and the Council for Doctoral Education (FUR) at Mid Sweden University.

\appendix

\section{Extra theorems and definitions}\label{appendix:extratheorems}

This appendix collects a small number of additional notions used to relate subsystem independence to operational preparation procedures. Throughout, $\alga\subseteq\algb(\hilbert)$ denotes a von Neumann algebra acting on a Hilbert space $\hilbert$, and $\alga'$ its commutant.

\begin{definition}[Factor]\label{def:factors}
A von Neumann algebra $\alga$ is a \emph{factor} iff its center is trivial, i.e.\ $\alga\cap \alga'=\mathbb{C}I$.
\end{definition}

\begin{definition}[Type~I factor]\label{def:typeI}
A factor $\alga$ is \emph{type~I} iff $\alga\simeq \algb(\mathcal K)$ for some Hilbert space $\mathcal K$.
\end{definition}

\begin{definition}[Type~III factor]\label{def:typeIII}
A von Neumann algebra $\mathcal M$ is \emph{type~III} iff it has no nonzero finite projections\footnote{A projection $P\in\mathcal M$ is
\emph{infinite} iff there exists a nonzero subprojection $0\neq Q<P$ such that $Q\sim P$, i.e.\ there is a
partial isometry $V\in\mathcal M$ with $V^*V=P$ and $VV^*=Q$.}.
If moreover $\mathcal M$ is a factor, we call it a \emph{type~III factor}.
Equivalently, for a factor $\mathcal M$ this means that for every nonzero projection $P\in\mathcal M$ there exists
a partial isometry $W\in\mathcal M$ such that $WW^*=P$ and 
$W^*W=\idm$. 
\end{definition}

\noindent
In nonrelativistic quantum mechanics one typically encounters type~I factors, whereas local algebras in
relativistic AQFT are generically type~III \cite{R_dei_2009}.

\medskip

We now recall the operational language used to formulate \emph{local} preparation procedures. An
\emph{operation} on a von Neumann algebra $\alga$ is a normal completely positive (CP) map
$T:\alga\to\alga$ satisfying $T(\idm)\leq \idm $ (subunital). The case $T(\idm)=\idm$ corresponds to a
\emph{nonselective} operation (no post-selection). A state is a positive normalized linear functional
$\omega:\alga\to\mathbb{C}$ with $\omega(\idm)=1$; it is \emph{normal} iff it is ultraweakly continuous (equivalently,
$\omega(A)=\tr(\rho A)$ for some density operator $\rho$ when $\alga=\algb(\hilbert)$).

Operational ``local preparation'' strengthens mere statistical independence: statistical independence only
asks for the \emph{existence} of joint extensions of prescribed marginal states, whereas local preparability
asks that a marginal state can be \emph{implemented} by an operation localized on one subsystem while
leaving the other subsystem unaffected.

\medskip

\begin{definition}[Nonselective operation]\label{def:operation}
A (normal) \emph{nonselective operation} on a vN algebra $\alga$ is a normal completely positive map $T:\alga\to\alga$ satisfying $T(\idm)=\idm$.
\end{definition}

\begin{definition}[Local preparability of a normal state]\label{def:localpreparability}
Let $\alga_1,\alga_2$ be commuting type~III vN factors on a Hilbert space $\hilbert$,
and let $\phi$ be a normal state on $\alga_1$.
We say that $\phi$ is \emph{locally preparable on $\alga_1$ independently of $\alga_2$} if there exists
a normal positive map $T:\mathcal{B}(\hilbert)\to\mathcal{B}(\hilbert)$ such that
\begin{equation}
T(A)=\phi(A)\, T(\idm)\quad\forall\,A\in\alga_1,
\qquad
T(B)= T(\idm)\,B\quad\forall\,B\in\alga_2.
\end{equation}
We say that \emph{some normal state is locally preparable} if there exists at least one normal state
$\phi$ on $\alga_1$ with this property.
\end{definition}

\begin{definition}[Nonselective local preparability of all normal states]\label{def:nonselectivepreparability}
Let $\alga_1,\alga_2$ be commuting type~III vN factors on a Hilbert space $\hilbert$.
We say that \emph{all normal states on $\alga_1$ are nonselectively locally preparable (independently of $\alga_2$)}
if for every normal state $\phi$ on $\alga_1$ there exists a (normal) map
$T:\mathcal{B}(\hilbert)\to\mathcal{B}(\hilbert)$ of the form
\begin{equation}
T(C)=\sum_i W_i^*\, C\, W_i,
\qquad
W_i\in \alga_2',
\end{equation}
such that
\begin{equation}
T(A)=\phi(A)\, T(\idm)\quad\forall\,A\in\alga_1,
\qquad
 T(\idm)=\idm.
\end{equation}
\end{definition}

The following theorem states their equivalence to the split property: 

\begin{theorem}[{\cite[Thm.~5.1]{summers2009}}]\label{th:splitlocalprepnonselprep} Let $\mathcal{A}_1,\mathcal{A}_2$ be commuting type~III vN factors on the Hilbert space $\mathcal{H}$.
Then the following are equivalent:
\begin{enumerate}\itemsep0pt\parskip0pt\parsep0pt\topsep0pt\partopsep0pt 
    \item The pair $(\mathcal{A}_1,\mathcal{A}_2)$ is split;
    \item Local preparability of some normal state;
    \item Nonselective local preparability of all normal states.
\end{enumerate}
\end{theorem}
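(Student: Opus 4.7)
The equivalence will be established in the cyclic order $(1)\Rightarrow (3)\Rightarrow (2)\Rightarrow (1)$. The first two implications are essentially constructive and use the structural information provided by the split property directly; the third is the substantive direction and will be reduced to Theorem~\ref{th:ifsplitthenopidpdt} (the split--product-independence dichotomy for type~III factors).

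\paragraph{Step 1: $(1)\Rightarrow(3)$.}
Assume the pair $(\alga_1,\alga_2)$ is split, with a type~I intermediate factor $\algm$ such that $\alga_1\subset\algm\subset\alga_2'$. By Theorem~\ref{th:equivalentpairsplitspatialisomorphism}, applied to the pair $(\algm,\algm')$, one obtains a unitary identification $\hilbert\simeq\mathcal{K}_1\otimes\mathcal{K}_2$ under which $\algm\simeq \algb(\mathcal{K}_1)\otimes\idm$ and $\algm'\simeq\idm\otimes\algb(\mathcal{K}_2)$; the inclusions then give $\alga_1\subset\algb(\mathcal{K}_1)\otimes\idm$ and $\alga_2\subset\idm\otimes\algb(\mathcal{K}_2)$. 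Given an arbitrary normal state $\phi$ on $\alga_1$, I would extend it to a normal state $\tilde\phi$ on $\algm\simeq\algb(\mathcal{K}_1)$ (using that normal states on a vN subalgebra extend to normal states on a larger type~I factor), represent $\tilde\phi$ by a density operator $\rho=\sum_k p_k\,|e_k\rangle\langle e_k|$ on $\mathcal{K}_1$, pick an orthonormal basis $\{|f_l\rangle\}$ of $\mathcal{K}_1$, and define Kraus operators
\begin{equation}
W_{k,l}\;:=\;\sqrt{p_k}\,|e_k\rangle\langle f_l|\otimes\idm\;\in\;\algm\;\subset\;\alga_2'.
\end{equation}
A direct computation then shows that $T(C):=\sum_{k,l}W_{k,l}^*\,C\,W_{k,l}$ is normal and completely positive, satisfies $T(\idm)=\idm$, acts as $T(A)=\phi(A)\idm$ for $A\in\alga_1$, and leaves $\alga_2$ pointwise invariant since $W_{k,l}$ commutes with $\alga_2\subset\alga_2{}''\subset(\alga_2')'$. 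This delivers the map required by Definition~\ref{def:nonselectivepreparability} for every normal state on $\alga_1$.

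\paragraph{Step 2: $(3)\Rightarrow(2)$.}
This is essentially a weakening step. A Kraus-type map $T(C)=\sum_i W_i^*\,C\,W_i$ with $W_i\in\alga_2'$ and $T(\idm)=\idm$ is in particular a normal positive map on $\algb(\hilbert)$; the commutation $W_i\in\alga_2'$ immediately implies $T(B)=\sum_iW_i^*W_i\cdot B=T(\idm)\,B=B$ for $B\in\alga_2$, and $T(A)=\phi(A)\,T(\idm)$ is given. Hence the map $T$ produced in~(3) satisfies the weaker requirements of Definition~\ref{def:localpreparability}, and since~(3) produces such a $T$ for \emph{every} normal state on $\alga_1$, a fortiori at least one exists, which is what (2) demands.

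\paragraph{Step 3: $(2)\Rightarrow(1)$, the main obstacle.}
This is the substantive direction: from the existence of a single normal positive map $T$ with the prescribed local-preparation factorization, we must recover a full type~I intermediate factor. My plan is to route this through product states rather than to construct $\algm$ directly. Given any normal states $\omega_1$ on $\alga_1$ and $\omega_2$ on $\alga_2$, one uses $T$ (together with vector/density representatives of $\omega_2$ on $\hilbert$) to build a candidate functional $\omega$ on $\alga_1\vee\alga_2$ satisfying $\omega(AB)=\omega_1(A)\omega_2(B)$ for $A\in\alga_1$, $B\in\alga_2$. The key inputs here are: normality and positivity of $T$, the factorization identities $T(A)=\phi(A)T(\idm)$ and $T(B)=T(\idm)B$, and the fact that $\alga_1$ and $\alga_2$ commute. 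The most delicate sub-step, and where I expect the real work to concentrate, is showing that $\omega_1$ can be varied freely away from the specific $\phi$ supplied by~(2); this typically uses normality of the $\alga_1$-predual together with density of the $\phi$-implementable states under modular/vector-representative rescalings on the type~III factor $\alga_1$. Once product extensions exist for all normal marginals, the pair $(\alga_1,\alga_2)$ is statistically independent in the product sense (Definition~\ref{def:Statistical independence in the product sense}), and since both algebras are type~III factors by hypothesis, Theorem~\ref{th:ifsplitthenopidpdt} (the cited Thm.~5.2 of~\cite{summers2009}) upgrades this to the split property, closing the cycle. The non-trivial step is exactly this density/extension argument on the type~III side; the rest of the chain is either a direct construction or a structural consequence of already established results in the excerpt.
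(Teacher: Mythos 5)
First, a point of comparison: the paper does not prove this statement at all --- it imports it verbatim as \cite[Thm.~5.1]{summers2009} --- so your attempt can only be judged on its own merits, not against an in-paper argument. On that basis, your Steps~1 and~2 are correct and essentially complete: the spatial decomposition $\hilbert\simeq\mathcal K_1\otimes\mathcal K_2$ induced by the type~I interpolating factor $\algm$, the extension of $\phi$ to a normal state on $\algm$, and the Kraus operators $W_{k,l}=\sqrt{p_k}\,\ketbra{e_k}{f_l}\otimes\idm\in\algm\subset\alga_2'$ do produce the map required by Definition~\ref{def:nonselectivepreparability}, and $(3)\Rightarrow(2)$ is indeed a pure weakening.

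The genuine gap is Step~3, which is a plan rather than a proof, and the two sub-steps you defer are precisely where all the content of the theorem lives. (i) Definition~\ref{def:localpreparability} hands you only a \emph{positive} (not completely positive, not unital) map $T$ satisfying $T(A)=\phi(A)T(\idm)$ and $T(B)=T(\idm)B$ \emph{separately} on $\alga_1$ and $\alga_2$; nothing is assumed about $T$ on products $AB$. To get $\omega(AB)=\phi(A)\,\omega_2(B)$ from $\hat\omega_2\circ T$ you need something like $T(AB)=\phi(A)\,T(\idm)B$, which is a multiplicative-domain/Schwarz-type statement. The standard multiplicative-domain argument requires $2$-positivity and unitality, and even granting those, the criterion $T(B^*B)=T(B)^*T(B)$ fails here unless $T(\idm)$ is a projection (one gets $T(\idm)B^*B$ versus $T(\idm)^2B^*B$). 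You also need to choose the extension $\hat\omega_2$ so that $\hat\omega_2(T(\idm)\,\cdot\,)|_{\alga_2}\propto\omega_2$ and $\hat\omega_2(T(\idm))\neq 0$, which is nontrivial when $T(\idm)$ is not invertible. (ii) The passage from the single preparable state $\phi$ to \emph{arbitrary} normal $\omega_1$ on $\alga_1$ --- which you need if you want to invoke Theorem~\ref{th:ifsplitthenopidpdt} via Definition~\ref{def:Statistical independence in the product sense} --- is asserted by appeal to ``density under modular/vector-representative rescalings'' without an argument; this is exactly where the type~III hypothesis (e.g.\ Connes--Størmer homogeneity of the normal state space, or proper infiniteness) must be deployed, and it is not. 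An alternative that would shorten your route is the D'Antoni--Longo criterion that a \emph{single} normal product state with faithful restrictions already yields the split property, but that result is not among the tools quoted in the paper and would itself need to be justified. As it stands, $(2)\Rightarrow(1)$ is reduced to a claim that is essentially as hard as the theorem itself.
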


We include one additional criterion linking uncorrelated states to commutativity, since it cleanly separates ``lack of correlations'' from ``algebraic separability''.

\begin{definition}[Uncorrelated state {\cite[Def.~5.3]{summers2009}}]\label{def:uncorrelated}
A normal state $\phi$ on $\alga_1\vee\alga_2$ is \emph{$\alga_1$--$\alga_2$-uncorrelated} if $\phi(E\wedge F)=\phi(E)\phi(F)$, $\forall$ projections $E\in\alga_1,\ F\in\alga_2.$
\end{definition}

\begin{theorem}[{\cite{Buchholz_2005}}]\label{th:uncorrelated_implies_commute}
Let $\alga_1$ and $\alga_2$ be vN subalgebras of a $W^*$-algebra $\alga$.
If an $\alga_1$--$\alga_2$-uncorrelated state exists on $\alga_1\vee\alga_2$, then $\alga_1$ and $\alga_2$ commute elementwise.
\end{theorem}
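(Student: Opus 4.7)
My plan is to reduce the claim to a statement about pairs of projections and then exploit the rigidity of the uncorrelated condition when applied to the ``four corners'' of a two-projection decomposition. First I would observe that, by the spectral theorem, every self-adjoint element of $\alga_1$ is the weak-operator limit of real linear combinations of its spectral projections (and similarly for $\alga_2$); since commutativity is preserved under weak limits and linear combinations, it suffices to establish $[E, F] = 0$ for every pair of projections $E \in \alga_1$, $F \in \alga_2$.

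Fix such $E$ and $F$. The key algebraic input is the Halmos two-projection decomposition: the four ``corner'' projections $E \wedge F$, $E \wedge F^\perp$, $E^\perp \wedge F$, and $E^\perp \wedge F^\perp$ are pairwise orthogonal, and their sum $P_c$ commutes with both $E$ and $F$ (indeed, $E$ and $F$ act diagonally on $P_c \hilbert$ and hence commute there). The complementary projection $P_g := I - P_c$ is the ``generic'' part, and $[E, F] = 0$ if and only if $P_g = 0$.

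The crucial step is to apply the uncorrelated condition to all four sign patterns: using that $E^\perp \in \alga_1$ and $F^\perp \in \alga_2$, and summing the four identities $\phi(E^{\epsilon_1} \wedge F^{\epsilon_2}) = \phi(E^{\epsilon_1})\,\phi(F^{\epsilon_2})$ over $\epsilon_1, \epsilon_2 \in \{+,-\}$, the left-hand side equals $\phi(P_c)$, while the right-hand side factorizes and sums to
\[
\bigl(\phi(E) + \phi(E^\perp)\bigr)\bigl(\phi(F) + \phi(F^\perp)\bigr) = 1.
\]
Hence $\phi(P_c) = 1$, equivalently $\phi(P_g) = 0$, for every such pair.

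The main obstacle is passing from $\phi(P_g) = 0$ to $P_g = 0$. In the GNS representation $(\pi, \Omega)$ of $\phi$, the vanishing $\phi(P_g) = 0$ is equivalent to $\pi(P_g)\Omega = 0$, but this does not on its own force $\pi(P_g)$ or $P_g$ to vanish. The cleanest route is to invoke that, in the situations of interest (and implicitly in the theorem as stated), the state is faithful or separating on $\alga_1 \vee \alga_2$---as is automatic, for instance, in the AQFT setting with a standard vacuum and the Reeh--Schlieder property---in which case $\phi(P_g) = 0$ immediately yields $P_g = 0$. Without faithfulness one obtains commutativity only modulo the annihilator of $\phi$, so the elementwise statement requires either that assumption or an additional separation argument on the underlying representation; I expect Buchholz's original proof handles this by working in a standard form where $\Omega$ is cyclic and separating, so that $\pi$ is faithful and the projection-level conclusion transfers back to $\alga_1$ and $\alga_2$.
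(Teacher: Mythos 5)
The paper does not actually prove this statement—it is imported verbatim from \cite{Buchholz_2005}—so there is no in-text argument to compare against; I am assessing your proposal on its own terms. Your first two steps are correct and are the natural route: reduction to projections via spectral calculus (commutation with a fixed operator does survive weak limits, and spectral projections of elements of $\alga_1$ lie in $\alga_1$), the Halmos two-projection decomposition with $P_c=E\wedge F+E\wedge F^\perp+E^\perp\wedge F+E^\perp\wedge F^\perp$ and $[E,F]=0$ iff $P_g:=\idm-P_c=0$, and the four-corners summation giving $\phi(P_c)=1$, hence $\phi(P_g)=0$. (One small point worth making explicit: each corner is the strong limit of $(EFE)^n$, so $P_g\in\alga_1\vee\alga_2$ and $\phi(P_g)$ is well defined.)

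The gap you flag at the final step is genuine, and in fact it cannot be closed under the hypotheses as stated: the mere existence of one normal uncorrelated state does \emph{not} imply elementwise commutativity. A counterexample: on $\hilbert=\mathbb{C}^2\oplus\mathbb{C}$ take rank-one projections $E,F$ on $\mathbb{C}^2$ in generic position (so $E\wedge F=E\wedge F^\perp=E^\perp\wedge F=E^\perp\wedge F^\perp=0$ and $[E,F]\neq 0$), set $E'=E\oplus 1$, $F'=F\oplus 1$, $\alga_1=\{E'\}''$, $\alga_2=\{F'\}''$, and let $\phi$ be the vector state of $e_3=0\oplus 1$. Then $E'\wedge F'=0\oplus 1$ while all other nontrivial meets vanish, and one checks $\phi(E'^{\epsilon}\wedge F'^{\delta})=\phi(E'^{\epsilon})\phi(F'^{\delta})$ for all four sign choices (the values are $1,0,0,0$), even though $[E',F']=[E,F]\oplus 0\neq 0$. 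So the theorem as transcribed in the paper has silently dropped a quantifier: the Buchholz--Summers result assumes uncorrelated states exist in \emph{every} subsector of the common state space (equivalently, that the family of uncorrelated normal states separates the nonzero projections of $\alga_1\vee\alga_2$), or else one must impose faithfulness of $\phi$ outright. Under either strengthening your argument closes immediately, since $P_g$ is annihilated by every uncorrelated state. Your diagnosis—that the elementwise conclusion needs a separating/faithfulness input, with the cyclic-and-separating GNS picture as the mechanism—locates the missing ingredient exactly; the only caveat is that this is not a technicality to be supplied but a hypothesis that must be restored to the statement itself.
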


\medskip

\tikzset{
  block/.style={
    rectangle, draw, rounded corners=2pt,
    fill=gray!10,
    align=center,
    font=\small,
    text width=10.8em,
    minimum height=3.7em,
    inner sep=3pt,
    drop shadow={opacity=.85, shadow xshift=1.3pt, shadow yshift=-1.3pt}
  },
  rel/.style={-Stealth, thick},
  rel2/.style={Stealth-Stealth, thick},
  lab/.style={
    font=\footnotesize,
    fill=white,
    inner sep=1.2pt,
    fill opacity=0.9,
    text opacity=1
  }
}

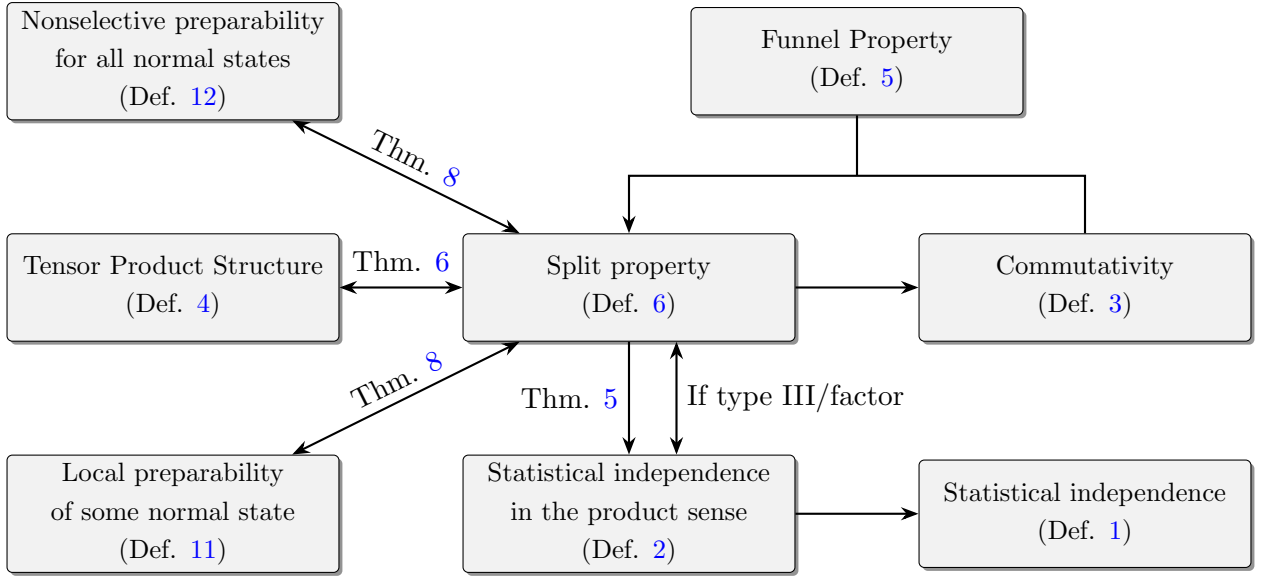
\begin{figure}[H]
  \centering
  \begin{tikzpicture}[node distance = 3cm, auto]
 \node [block] (split) {Split property \\ (Def.  \ref{def:split property})};
     \node [block, left of = split, node distance = 6cm] (structureH) {Tensor Product Structure \\(Def. \ref{def_TPS})};
     \node [block, right of = split, node distance = 6cm] (commensurability) {Commutativity \\ (Def.  \ref{def:Commutativity})};
     \node[block] (funnel) at ($(split)!0.5!(commensurability) + (0,3cm)$)
{Funnel Property \\ (Def. \ref{def:funnel})};
    \node [block, below of = split, node distance = 3cm] (statidpdproduct) {Statistical independence in the product sense \\ (Def. \ref{def:Statistical independence in the product sense})};
   \node [block, right of  = statidpdproduct, node distance = 6cm] (statidpdt) {Statistical independence \\ (Def.  \ref{def:Statistical independence})};

    \node [block, below of = structureH, node distance = 3cm,  text opacity = 1] (locprep) {Local preparability of some normal state \\ (Def.   \ref{def:localpreparability})};
    
    \node [block, above of = structureH, node distance = 3cm,  text opacity = 1] (nonsel) {Nonselective preparability for all normal states \\ (Def.   \ref{def:nonselectivepreparability})};
   
 \draw[rel2] (split) -- node[midway, above=2pt, sloped] {Thm. \ref{th:equivalentpairsplitspatialisomorphism}}(structureH);
    \draw [rel] (split) --(commensurability);
    \draw [rel] (split) --node[midway, left] {Thm. \ref{th:ifsplitthenopidpdt}} (statidpdproduct);
    \draw [rel2]([xshift=0.625cm]split.south) --node[midway, right] {If type III/factor} ([xshift=0.625cm]statidpdproduct.north);
      \draw [rel] (statidpdproduct) -- (statidpdt);
\coordinate (fjoin) at ($(funnel.south)+(0,-0.8cm)$);
\draw[thick] (funnel.south) -- (fjoin);
\draw[thick] (fjoin) -| (commensurability.north);
\draw[rel] (fjoin) -| (split.north);
 \draw [rel2] (split) --node[midway, above, sloped] {Thm. \ref{th:splitlocalprepnonselprep}}(locprep);
    
    \draw [rel2] (split) --node[midway, above, sloped] {Thm. \ref{th:splitlocalprepnonselprep}}(nonsel);
  \end{tikzpicture}
  \caption{Summary of all the properties, and their respective relations, introduced in the text and appendix. }
  \label{graph:big}
\end{figure}

\section{No-signalling versus commutativity of subalgebras}
\label{app:nosignalling}

In this appendix, we clarify the logical relation between no-signalling and commutativity of subsystem algebras in quantum theory. The essential point is that \emph{no-signalling} admits inequivalent operational formulations of different strength, depending on which local interventions are taken to be physically admissible. While elementwise commutativity of subalgebras is sufficient for all reasonable notions of no-signalling, it is necessary only for the strongest (``maximal'') ones. This distinction is crucial in constrained gauge theories, and in particular for gravity, where the set of physically implementable operations may be strictly smaller than the set of algebraically definable operations.

\vspace{0.5cm}

Let $\mathcal H$ be a Hilbert space and let $\mathcal A_1,\mathcal A_2\subset\mathcal B(\mathcal H)$ be von Neumann subalgebras, representing the observables accessible to two parties \emph{Alice} and \emph{Bob}. We identify normal states with density operators $\rho\in\mathcal T_1(\mathcal H)$ (positive trace-class operators with $\tr\rho=1$), so that expectation values are $\langle X\rangle_\rho=\tr(\rho X)$ for $X\in\mathcal B(\mathcal H)$.

\subsection*{Operational no-signalling and restricted local interventions}
We denote by $\mathrm{Ops}_\mathrm{A}(\mathcal A_1)$ a chosen set of \emph{physically admissible} local operations for Alice. Concretely, we take elements of $\mathrm{Ops}_\mathrm{A}(\mathcal A_1)$ to be normal completely positive, trace-preserving (CPTP) maps
$\mathcal E_\mathrm{A}:\mathcal T_1(\mathcal H)\to\mathcal T_1(\mathcal H)$.
A standard \emph{sufficient} algebraic criterion for ``localization on $\mathcal A_1$'' is that $\mathcal E_\mathrm{A}$ admits a Kraus decomposition with Kraus operators in $\mathcal A_1$:
\begin{equation}
\label{eq:krausA}
\mathcal E_\mathrm{A}(\rho)=\sum_i K_i\rho K_i^\dagger,
\qquad
K_i\in\mathcal A_1,
\qquad
\sum_i K_i^\dagger K_i=\idm.
\end{equation}
In unconstrained quantum mechanics, one often idealizes $\mathrm{Ops}_\mathrm{A}(\mathcal A_1)$ as \emph{all} such Kraus maps. In constrained systems (e.g.\ gauge theories), this identification is generally too strong: not every algebraically definable Kraus map corresponds to a physically admissible intervention, so $\mathrm{Ops}_\mathrm{A}(\mathcal A_1)$ should be regarded as a (possibly strict) subset\footnote{Equivalently, in the Heisenberg picture one may characterize ``local'' operations by requiring that the dual map acts trivially on the commutant, $\mathcal E_\mathrm{A}^\ast(X)=X$ for all $X\in\mathcal A_1'$. Kraus locality \eqref{eq:krausA} implies this, but the physically admissible set may still be smaller.}.

We denote by $\mathcal E_\mathrm{A}^\ast:\mathcal B(\mathcal H)\to\mathcal B(\mathcal H)$ the normal unital dual map defined by
$\tr(\mathcal E_\mathrm{A}(\rho)X)=\tr(\rho\,\mathcal E_\mathrm{A}^\ast(X))$ for all $\rho\in\mathcal T_1(\mathcal H)$ and $X\in\mathcal B(\mathcal H)$.

\begin{definition}[Operational no-signalling]
We say that \emph{operational no-signalling from Alice to Bob} holds (relative to $\mathrm{Ops}_\mathrm{A}(\mathcal A_1)$) if, for all normal states $\rho\in\mathcal T_1(\mathcal H)$, all admissible CPTP maps $\mathcal E_\mathrm{A}\in \mathrm{Ops}_\mathrm{A}(\mathcal A_1)$, and all $X\in \mathcal A_2$,
\begin{equation}
\label{eq:nosignalling}
\tr(\rho\,X)=\tr\!\big(\mathcal E_\mathrm{A}(\rho)\,X\big).
\end{equation}
\end{definition}

This notion of no-signalling is explicitly relative to the allowed class of local operations. In particular, if $\mathrm{Ops}_\mathrm{A}(\mathcal A_1)$ is severely restricted, then \eqref{eq:nosignalling} may hold even when the algebras fail to commute.

\paragraph{Commutativity implies operational no-signalling, but the converse fails.}
If $[\mathcal A_1,\mathcal A_2]=0$, then Eq.~\eqref{eq:nosignalling} holds for all states and for all admissible local operations that admit a Kraus decomposition with $K_i\in\mathcal A_1$. Indeed, if $\mathcal E_\mathrm{A}$ admits Kraus operators $\{K_i\}\subseteq\mathcal A_1$ as in \eqref{eq:krausA}, then $[K_i,X]=0$ for all $X\in \mathcal A_2$, and hence
\[
\tr\!\big(\mathcal E_\mathrm{A}(\rho)\,X\big)
=\sum_i \tr(\rho\,K_i^\dagger X K_i)
=\sum_i \tr(\rho\,X K_i^\dagger K_i)
=\tr(\rho\,X).
\]
Commutativity is therefore a sufficient condition for operational no-signalling.

The converse implication fails in general: operational no-signalling \eqref{eq:nosignalling} does not, by itself, imply commutativity. The reason is that \eqref{eq:nosignalling} constrains only those disturbances that can be effected by \emph{admissible} operations in $\mathrm{Ops}_\mathrm{A}(\mathcal A_1)$. Even if $[\mathcal A_1,\mathcal A_2]\neq 0$, signalling may be impossible because the physically admissible operations attributed to Alice are too restricted to change expectation values of Bob-observables in $\mathcal A_2$.

\begin{ex}
Let $\mathcal H=\mathcal H_{\mathrm{A},L}\otimes\mathcal H_{\mathrm{A},R}\otimes\mathcal H_{\mathrm{B}}$ and define
\[
\mathcal A_1=\mathcal B(\mathcal H_{\mathrm{A},L}\otimes\mathcal H_{\mathrm{A},R})\otimes\idm_{\mathrm{B}},
\qquad
\mathcal A_2=\idm_{\mathrm{A},L}\otimes U_{\mathrm{A},R\,\mathrm{B}}\big(\idm_{\mathrm{A},R}\otimes\mathcal B(\mathcal H_{\mathrm{B}})\big)U_{\mathrm{A},R\,\mathrm{B}}^\dagger,
\]
where $U_{\mathrm{A},R\,\mathrm{B}}$ is an entangling unitary on $\mathcal H_{\mathrm{A},R}\otimes\mathcal H_{\mathrm{B}}$. In general,
$[\mathcal A_1,\mathcal A_2] \neq 0$ because operators acting nontrivially on $\mathcal H_{\mathrm{A},R}$ fail to commute with $\mathcal A_2$.

Now suppose that Alice's \emph{admissible control} is restricted to the tensor factor $\mathcal H_{\mathrm{A},L}$, i.e.\ take $\mathrm{Ops}_\mathrm{A}(\mathcal A_1)$ to consist of CPTP maps of the form
$\mathcal E_\mathrm{A}=\Phi_{\mathrm{A},L}\otimes\idm_{\mathrm{A},R}\otimes\idm_{\mathrm{B}}$.
Then $\mathcal E_\mathrm{A}^\ast(X)=X$ for all $X\in \mathcal A_2$, and hence
\[
\tr(\rho\,X)=\tr\!\big(\mathcal E_\mathrm{A}(\rho)\,X\big)
\qquad \forall\,\rho,\ \forall\,X\in\mathcal A_2.
\]
Thus, noncommuting subalgebras can coexist with operational no-signalling once the admissible class of ``local'' operations is restricted, for instance by having the \emph{algebra of accessible observables} and the \emph{algebra of operationally controllable interventions} not coincide.
\end{ex}

Elementwise commutativity expresses \emph{commensurability}, whereas operational no-signalling expresses the absence of causal influence \emph{relative to the physically admissible} class of local interventions. The latter need not imply the former in general, unless one assumes a sufficiently rich (in a sense, \emph{maximal}) class of local operations. This motivates considering a strictly stronger notion of no-signalling, often called \emph{macrocausality}, defined by invariance under all non-selective sharp measurements.

\subsection*{Macrocausality and maximal no-signalling}

Given a (finite or countable) projection-valued measure (PVM) $\{P_i\}_i\subseteq\mathcal A_1$, the associated non-selective L\"uders update of a normal state $\rho$ is
\begin{equation}
\label{eq:luders}
\rho\longmapsto\rho':=\sum_i P_i\rho P_i,
\end{equation}
with the sum understood in trace norm (equivalently ultraweakly on expectation values).
We say that \emph{Alice cannot signal to Bob by non-selective measurements} if for every normal state $\rho$, every PVM $\{P_i\}\subseteq\mathcal A_1$, and every projection $Q\in\mathcal A_2$,
\begin{equation}
\label{eq:nosig_proj}
\tr(\rho\,Q)=\tr(\rho'\,Q).
\end{equation}

\begin{theorem}[\cite{Malament1996-MALIDO,EmerGe_proj_2}]
\label{thm:nosig_comm}
Condition~\eqref{eq:nosig_proj} holds for all normal states, all PVMs in $\mathcal A_1$, and all projections in $\mathcal A_2$ if and only if
\[
[\mathcal A_1,\mathcal A_2]=0.
\]
\end{theorem}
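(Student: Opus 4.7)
The easy direction is the $(\Leftarrow)$ implication: assuming $[\mathcal A_1,\mathcal A_2]=0$, one verifies \eqref{eq:nosig_proj} by a direct trace computation. For any PVM $\{P_i\}\subseteq\mathcal A_1$ and $Q\in\mathcal A_2$, commutativity lets us pull $Q$ past each $P_i$:
\[
\tr(\rho'Q)=\sum_i\tr(P_i\rho P_iQ)=\sum_i\tr(\rho P_iQP_i)=\sum_i\tr(\rho Q P_i)=\tr\!\Big(\rho Q\sum_i P_i\Big)=\tr(\rho Q),
\]
using only $[P_i,Q]=0$, cyclicity of the trace, and $\sum_i P_i=\idm$.

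The substantive direction is $(\Rightarrow)$. The plan is to reduce the assumed identity to an operator equation, and then to exploit a \emph{minimal} PVM to extract commutativity of projections. First, I would specialize the hypothesis to the two-element PVM $\{P,\idm-P\}$ associated with an arbitrary projection $P\in\mathcal A_1$. For every projection $Q\in\mathcal A_2$ and every normal state $\rho$, the condition becomes $\tr(\rho Q)=\tr\!\big(\rho\,[PQP+(\idm-P)Q(\idm-P)]\big)$. Since normal states separate points of $\mathcal B(\hilbert)$ (their linear span is dense in the predual), this forces the operator identity
\[
Q=PQP+(\idm-P)Q(\idm-P).
\]
Expanding $(\idm-P)Q(\idm-P)=Q-QP-PQ+PQP$ and simplifying yields $PQ+QP=2PQP$. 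Multiplying this relation on the left by $P$ gives $PQ=PQP$, and on the right by $P$ gives $QP=PQP$. Hence $[P,Q]=0$.

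The last step is to promote commutativity of projections to commutativity of the full algebras. Here I would invoke the standard fact that any von Neumann algebra is generated (as a weakly closed algebra) by its projections: for self-adjoint $A\in\mathcal A_1$ and $B\in\mathcal A_2$, the spectral theorem provides spectral projections $E_A(\Delta)\in\mathcal A_1$ and $E_B(\Delta')\in\mathcal A_2$, and the above step shows $[E_A(\Delta),E_B(\Delta')]=0$ for all Borel sets $\Delta,\Delta'$. By the bounded functional calculus and weak continuity of multiplication on bounded sets, $[A,B]=0$, and then by linearity (splitting arbitrary elements into real and imaginary parts) $[\mathcal A_1,\mathcal A_2]=0$.

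The only place that requires mild care is the passage from ``trace identity for all normal states'' to ``operator identity'': this uses that the trace pairing is nondegenerate between $\mathcal T_1(\hilbert)$ and $\mathcal B(\hilbert)$, i.e.\ normal states separate bounded operators. The rest is a short algebraic manipulation and an appeal to the projection-generation of von Neumann algebras; no real obstacle is anticipated.
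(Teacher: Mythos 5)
Your proof is correct. The paper does not actually prove Theorem~\ref{thm:nosig_comm}; it states it as a cited result from the literature. Your argument is precisely the standard one behind that citation: the $(\Leftarrow)$ direction by cyclicity of the trace, and the $(\Rightarrow)$ direction by specializing to the two-outcome PVM $\{P,\idm-P\}$, using nondegeneracy of the trace pairing to obtain $Q=PQP+(\idm-P)Q(\idm-P)$, deducing $[P,Q]=0$ by the left/right multiplication trick, and then passing from projections to the full algebras via the spectral theorem and the fact that a von Neumann algebra is generated by its projections. All steps check out; no gaps.
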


The equivalence in the theorem relies on a \emph{maximal} operational assumption: invariance under all non-selective L\"uders measurements associated with arbitrary PVMs (equivalently, under a very rich family of local interventions). While mathematically natural, this assumption is too strong for constrained gauge theories and gravity, where not every algebraic measurement corresponds to a physically admissible local intervention. With respect to the weaker operational notion of no-signalling in Eq.~\eqref{eq:nosignalling} (relative to the chosen set $\mathrm{Ops}_\mathrm{A}(\mathcal A_1)$), violations of commutativity induced by gravitational dressing do not \emph{necessarily} imply superluminal signalling, but rather the breakdown of algebraic separability between subsystems.

\phantomsection
\addcontentsline{toc}{section}{References}

\let\oldbibliography\thebibliography
\renewcommand{\thebibliography}[1]{
  \oldbibliography{#1}
  \setlength{\parskip}{0pt}
  \setlength{\itemsep}{0pt} 
  \footnotesize 
}
\bibliographystyle{JHEP2}
\bibliography{bib}

\end{document}